
\documentclass[12pt,transaction, draftclsnofoot,onecolumn]{IEEEtran}
\usepackage{amsfonts}
\usepackage{amssymb}
\usepackage{hyperref}
\usepackage{graphicx}
\usepackage{subfigure}
\usepackage{enumerate}
\usepackage{amsmath}
\usepackage{color}
\usepackage{amsthm}
\usepackage{amsmath}
\usepackage{algorithm,algorithmic}
\usepackage{color}
% correct bad hyphenation here
\hyphenation{since}

%%% \argmin

\newcommand{\bp}{\begin{proof} \small }
\newcommand{\ep}{\end{proof} \normalsize}
\newcommand{\epx}{\end{proof} \small}
\newcommand{\bpa}{\begin{proofappx} \footnotesize }
\newcommand{\epa}{\end{proofappx} \small }

\newtheorem{proposition}{Proposition}

\newtheorem{lemma}{Lemma}

\newtheorem{definition}{Definition}

\newtheorem*{theorem*}{Theorem}
\newtheorem*{proposition*}{Proposition}
\newtheorem*{corollary*}{Corollary}
\newtheorem*{lemma*}{Lemma}
\newtheorem*{assumption*}{Assumption}
\newtheorem*{definition*}{Definition}
\newtheorem*{claim*}{Claim}

\newcommand{\be}{\begin{equation}}
\newcommand{\ee}{\end{equation}}
\newcommand{\bs}{\begin{subequations}}
\newcommand{\es}{\end{subequations}}
\newcommand{\bq}{\begin{eqnarray}}
\newcommand{\eq}{\end{eqnarray}}
\newcommand{\bqn}{\begin{eqnarray*}}
\newcommand{\eqn}{\end{eqnarray*}}

\newcommand{\ba}{\left[ \begin{array}}
\newcommand{\ea}{\\ \end{array} \right]}
\newcommand{\ben}{\begin{enumerate}}
\newcommand{\een}{\end{enumerate}}

\def\a{{\boldsymbol{a}}}

\def\d{{\boldsymbol{d}}}

%%%%%%% Definition of Zint
\def\real{{\mathchoice%
{\hbox{\rm\setbox1=\hbox{I}\copy1\kern-.45\wd1 R}}
{\hbox{\rm\setbox1=\hbox{I}\copy1\kern-.45\wd1 R}}
{\hbox{\scriptsize\rm\setbox1=\hbox{I}\copy1\kern-.45\wd1 R}}
{\hbox{\scriptsize\rm\setbox1=\hbox{I}\copy1\kern-.45\wd1 R}}}}

\def\Zint{{\mathchoice{\setbox1=\hbox{\sf Z}\copy1\kern-.75\wd1\box1}
{\setbox1=\hbox{\sf Z}\copy1\kern-.75\wd1\box1}
{\setbox1=\hbox{\scriptsize\sf Z}\copy1\kern-.75\wd1\box1}
{\setbox1=\hbox{\scriptsize\sf Z}\copy1\kern-.75\wd1\box1}}}
\newcommand{\complex}{ \hbox{\rm C\kern-0.45em\rule[.07em]{.02em}{.58em}%
\kern 0.43em}}

\begin{document}
%
% paper title
% can use linebreaks \\ within to get better formatting as desired
\title{Collaborative Service Caching for Edge Computing in Dense Small Cell Networks}
%
%
% author names and IEEE memberships
% note positions of commas and nonbreaking spaces ( ~ ) LaTeX will not break
% a structure at a ~ so this keeps an author's name from being broken across
% two lines.
% use \thanks{} to gain access to the first footnote area
% a separate \thanks must be used for each paragraph as LaTeX2e's \thanks
% was not built to handle multiple paragraphs
%
\author{Lixing~Chen,~\IEEEmembership{Student~Member,~IEEE},
        ~Jie Xu,~\IEEEmembership{Member,~IEEE}% <-this % stops a space
\thanks{L. Chen and J. Xu are with the Department of Electrical and
	Computer Engineering, University of Miami. Email: lx.chen@miami.edu, jiexu@miami.edu}}
\maketitle
\vspace{-0.3 in}
\begin{abstract}
%\boldmath
Mobile Edge Computing (MEC) pushes computing functionalities away from the centralized cloud to the proximity of data sources, thereby reducing service provision latency and saving backhaul network bandwidth. Although computation offloading has been extensively studied in the literature, service caching is an equally, if not more, important design topic of MEC, yet receives much less attention. Service caching refers to caching application services and their related data (libraries/databases) in the edge server, e.g. MEC-enabled Base Station (BS), enabling corresponding computation tasks to be executed. Since only a small number of services can be cached in resource-limited edge server at the same time, which services to cache has to be judiciously decided to maximize the system performance. In this paper, we investigate collaborative service caching in MEC-enabled dense small cell (SC) networks. We propose an efficient decentralized algorithm, called CSC (Collaborative Service Caching), where a network of small cell BSs optimize service caching collaboratively to address a number of key challenges in MEC systems, including service heterogeneity, spatial demand coupling, and decentralized coordination. Our algorithm is developed based on parallel Gibbs sampling by exploiting the special structure of the considered problem using graphing coloring. The algorithm significantly improves the time efficiency compared to conventional Gibbs sampling, yet guarantees provable convergence and optimality. CSC is further extended to the SC network with selfish BSs, where a coalitional game is formulated to incentivize collaboration. A coalition formation algorithm is developed by employing the \textit{merge-and-split} rules and ensures the stability of the SC coalitions. Systematic simulations are carried out to evaluate the efficacy and performance of the proposed algorithm. The results show that our algorithm can effectively reduce edge system operational cost for both cooperative and selfish SCs.
\end{abstract}

% Note that keywords are not normally used for peerreview papers.
%\begin{IEEEkeywords}
%IEEEtran, journal, \LaTeX, paper, template.
%\end{IEEEkeywords}

% For peer review papers, you can put extra information on the cover
% page as needed:
% \ifCLASSOPTIONpeerreview
% \begin{center} \bfseries EDICS Category: 3-BBND \end{center}
% \fi
%
% For peerreview papers, this IEEEtran command inserts a page break and
% creates the second title. It will be ignored for other modes.
\IEEEpeerreviewmaketitle

\section{Introduction}
Modern mobile applications, such as Virtual/Augmented Reality (VR/AR), Immersible Communications, Mobile Gaming, and Connected Vehicles, are becoming increasingly data-hungry, compute-demanding and latency-sensitive. While cloud computing has been leveraged to deliver elastic computing power and storage to support resource-constrained end-user devices in the last decade, it is meeting a serious bottleneck as moving all the distributed data and computing-intensive applications to the remote clouds not only poses an extremely heavy burden on today's already-congested backbone networks but also results in large transmission latencies that degrade quality of service. As a remedy to these limitations, mobile edge computing (MEC) (a.k.a. fog computing) \cite{mao2017mobile,shi2016edge} has recently emerged as a new computing paradigm to push the frontier of data and services away from centralized cloud infrastructures to the logical edge of the network, thereby enabling analytics and knowledge generation to occur closer to the data sources.

MEC provides cloud computing capabilities within the radio access network in the vicinity of mobile subscribers. As a major deployment scenario, edge servers are deployed at mobile base stations (BSs) to serve end-users' computation tasks \cite{mao2016dynamic}, thereby mitigating the congestion within the core network. On the other hand, BSs are also able to access the cloud data centers in case of a need for higher computation power or storage capacity, resulting in a hierarchical computation offloading architecture among end-users, BSs, and the cloud. While computation offloading has been the central theme of most works studying MEC, or more broadly, mobile cloud computing (MCC), what is often ignored is the heterogeneity and diversity of mobile services, and how these services are cached on BSs in the first place. Specifically, service caching (or service placement) refers to caching computation services and their related libraries/databases in the edge server co-located with the BS based on spatial-temporal service popularity information \cite{mao2017mobile}, thereby enabling user applications requiring these services to be executed at the network edge in a timely manner. For instance, visitors in a museum can use AR service for better sensational experience. Thus, it is desirable to cache AR services at the BS serving this region for providing the real-time service. On the other hand, users  play mobile games at home in the evening. Such information will then suggest operators to cache gaming services during this typical period for handling huge computation loads.

However, unlike the cloud which has huge and diverse resources, the limited computing and storage resources of an individual BS allow only a very small set of services to be cached at a time. As a result, which services are cached on the BS determines which application tasks can be offloaded to the edge server, thereby significantly affecting the edge computing performance. Although the BS can adjust its cached service on the per-task basis, the heterogeneous and time-varying user tasks, especially when the BS is serving multiple users, would lead to frequent \textit{cache-and-tear} operations \cite{mao2017mobile}. This not only imposes additional traffic burden on the core network but also causes extra computation latency, thereby degrading user quality of experience. As such, service caching is a relatively long-term decision compared to computation offloading, and which services to cache must be judiciously decided given the limited resources of individual BSs based on the predicted service popularity in order to optimize the edge computing performance.

Parallel to MEC, the Small Cell (SC) concept has appeared to provide seamless coverage, boost throughput and save energy consumption in cellular networks. The density of BSs in cellular networks has kept increasing since its birth to nowadays 4G networks, and is expected to reach about 40-50 BSs/km$^2$ in next generation 5G networks and beyond, namely the ultra-dense networks (UDN) \cite{ge20165g}. It is envisioned that the majority of SCs will be purchased, deployed and maintained by either end-users in their homes in a ``plug and play'' fashion, or enterprises in commercial buildings with no or minimal assistance from mobile operators. By enabling these SCs to operate in an open subscriber group (OSG) mode \cite{garcia2010open} and thus provide service in their immediate vicinities, mobile operators can significantly lower their capital expenditures and operational expenditures. In such dense networks, a typical mobile device will be within the coverage of multiple SCs. This creates an opportunity for nearby small cell BSs to virtually pool their computation resources together, share their cached services and collaboratively serve users' computation workload. Specifically, small-cell BSs can collaboratively decide which services to cache so that the set of services available at the network edge is maximized to the users, see Fig.\ref{scenario} for illustration.

\begin{figure}[htb]
	\centering
	% Requires \usepackage{graphicx}
	\includegraphics[width=0.5\textwidth]{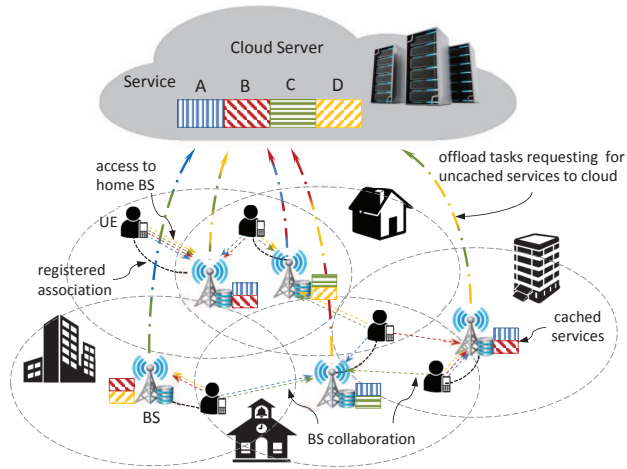}\\
	\vspace{-0.1 in}
	\caption{System illustration. A BS can only cache a subset of services; requested service can be executed by the BS only if it is cached at that BS; requests for services not cached at registered BS can be satisfied by BS collaboration or cloud offloading; BSs have to jointly optimize service caching decisions.}\label{scenario}
    \vspace{-0.3 in}
\end{figure}

Although the idea of collaborative service caching in dense small cell networks offers a promising solution to enhance MEC performance, how to achieve optimal collaborative service caching faces many challenges. Firstly, compared to conventional clouds that manage only the computing resource, MEC leads to the co-provisioning of radio access and computing services by the BSs, thus mandating a new model that can holistically capture both aspects of the system. BSs can collaborate not only in shared service caching but also the physical transmission for service delivery. Secondly, although dense deployment creates much coverage overlapping, BSs are still geographically distributed. This leads to a complex multi-cell network where demand and resources become highly coupled and hence an extremely difficult combinational optimization problem for collaborative service caching. Thirdly, since small cells are often owned and deployed by selfish individual users (e.g. home/enterprise owners), small cells will be reluctant to participate in the collaborative system without proper incentives. Therefore, incentives must be devised and incorporated into the collaborative service caching scheme.

In this paper, we study the extremely compelling but must less investigated problem of service caching in MEC-enabled dense small cell networks, and develop novel techniques for performing collaborative service caching in order to optimize the MEC performance. The main contributions of this paper are summarized as follows.

(1) We formulate the collaborative service caching in MEC-enabled dense small cell networks as a utility maximization problem. Both cases where small cells are fully cooperative and strategic (i.e. self-interested) are considered. To our best knowledge, this is the first work that studies collaborative service caching in dense networks.

(2) We develop an efficient decentralized algorithm, called CSC (Collaborative Service Caching), to solve the collaborative service caching problem in cooperative SC networks, which is an extremely difficult combinational optimization problem. Our algorithm is developed based on the Gibbs sampler, a popular Markov Chain Monte Carlo (MCMC) algorithm. Innovations are made to enable parallel Gibbs sampling based on the graph coloring theory, thereby significantly accelerating the algorithm convergence.

(3) We employ coalitional game to understand the behavior of strategic SCs when performing collaborative service caching. Incentive mechanisms are used to promote SC collaboration within generated coalitions. A decentralized coalition formation algorithm based on merge-and-split processes is developed and we prove that the proposed algorithm results in \textit{Pareto-optimal} stable coalitions among the SCs.

The rest of this paper is organized as follows. Section II reviews related works. Section III presents the system model. Section IV focuses on the collaborative service caching problem with fully cooperative SCs. Section V designs the collaborative service caching for strategic SCs. Section VI presents the simulation results, followed by the conclusion in Section VII.

\section{Related Work}
Computation offloading is the central theme of many prior studies in both mobile cloud computing (MCC) \cite{fernando2013mobile,buyya2009cloud} and mobile edge computing (MEC) \cite{mao2017mobile,shi2016edge}, which concerns what/when/how to offload users' workload from their devices to the cloud or the edge servers. Various works have studied different facets of this problem, considering e.g. stochastic task arrivals \cite{huang2012dynamic,liu2016delay}, energy efficiency \cite{xu2017online,mao2016dynamic}, collaborative offloading \cite{chen2017socially,tanzil2016distributed} etc. However, the implicit assumption is that edge servers can process whatever types of computation tasks that are offloaded from users without considering the availability of services in edge servers, which in fact is crucial in MEC due to the limited resources of edge servers.

Similar service caching/placement problems, known as virtual machine (VM) placement, have been investigated in conventional cloud computing systems. VM placement over multiple clouds is studied in \cite{tordsson2012cloud,li2009enacloud,gao2013multi}, where the goal is to reduce the deployment cost, maximize energy saving and improve user experience, given constraints on hardware configuration and load balancing. However, these works cannot be directly applied to design efficient service caching policies for MEC since mobile networks are much more complex and volatile, and optimization decisions are coupled both spatially and temporally.

Service caching/placement is also related to content caching/placement in network edge devices \cite{shanmugam2013femtocaching}. Various content caching strategies have been proposed, e.g., the authors in \cite{prabh2005energy} aim to find optimal locations to cache the data that minimize packet transmissions in wireless sensor nodes and the authors in \cite{liu2017caching} employ zero-forcing beamforming to null interference and optimize the content caching for maximizing the success probability. The concept of FemtoCaching is introduced in \cite{shanmugam2013femtocaching} which studies content placement in small cell networks to minimize content access delay. The idea of using caching to support mobility has been investigated in \cite{wang2015dynamic}, where the goal is to reduce latency experienced by users moving between cells. Learning-based content caching policies are developed in \cite{muller2017context} for wireless networks with \textit{a priori} unknown content popularity. While content caching mainly concerns with storage capacity constraints, service caching has to take into account both computing and storage constraints, and has to be jointly optimized with task offloading to maximize the overall system performance.

Service caching/placement for edge systems has gained little attention until very recently. \cite{skarlat2017towards} presents a centralized service placement strategy which allows placement of IoT services on virtualized fog resources while considering QoS constraints like execution deadline. The service placement problem in \cite{skarlat2017towards} is formulated to be an integer linear programming problem and solved by CPLEX \cite{CPLEX} in a centralized manner. By contrast, the collaborative service caching problem in our paper is a non-linear combinational problem, which is much more difficult to be solved by existing Non-linear Integer Programming solvers. More importantly, instead of employing a centralized controller, we are more interested in a distributed network setting where BSs are able to make caching decisions in a decentralized manner without collecting the information of the whole network at a controller node. The most related work probably is \cite{yang2016cost}, which studied the joint optimization of service caching/placement over multiple cloudlets and load dispatching for end users' requests. There are several significant differences of our work. First, while the coverage areas are assumed to be non-overlapping for different cloudlets in \cite{yang2016cost}, BSs have overlapping coverage areas in our considered dense cellular network. Second, while \cite{yang2016cost} only develops heuristic solutions for service caching problem, we firmly prove the optimality of our algorithm. Third, while the algorithm is centralized in \cite{yang2016cost}, our algorithm enables decentralized coordination among BSs. Moreover, we consider the selfish nature of small cell owners. We also note that the term ``service placement'' was used in some other literature \cite{wang2017dynamic,zhang2013dynamic} in a different context. The concern is to assign task instances to different clouds but there is no limitation on what types of services that each cloud can run.

\section{System Model}
\subsection{Network}
We consider a network of $N$ densely deployed SCs, indexed by $\mathcal{N}$. Each SC $n \in \mathcal{N}$ has an access point/base station (BS) endowed with computation and storage resources and hence can provide edge computing services to end users in its radio range. Each SC $n$ has a set of registered user equipments (UEs) within its coverage, indexed by $\mathcal{M}_n$. Let the set of all users be $\mathcal{M} = \cup_{n \in\mathcal{N}} \mathcal{M}_n$. For instance, in a typical IoT scenario in a multi-story building, a small cell BS (e.g. femtocell BS) and multiple mobile devices, sensors and things are deployed in each room of this building. UEs are authorized to access the edge computing resources of their home BS (i.e. the BS of the SC that they are registered to). However, due to the dense deployment of SCs in the building, a UE is in the radio coverage of multiple SCs besides its home BS. For each UE $m$, let $\mathcal{N}_m \subseteq \mathcal{N}$ be the set of reachable BSs. Let $H_{m,n}$ be the uplink channel condition between UE $m$ and BS $n$. Typically, the home BS of a UE has the best channel condition since it is often in the same room of the UE.

We say that two SCs $i, j\in \mathcal{N}$ are neighbors if there exists some UE $m$ such that $i, j \in \mathcal{N}_m$. In other words, SCs $i$ and $j$ can potentially collaborate with each other to serve at least one common UE. Based on this definition, the network of SCs can be described by a graph $G = \langle\mathcal{N}, \mathcal{E}\rangle$ where $\mathcal{E}$ is the edge set and there exists an edge between SCs $i$ and $j$ if they are neighbors. Let $\Omega_i \subseteq \mathcal{N}$ denote the one-hop neighborhood of SC $i$.

\subsection{Services and Demand}
Service is an abstraction of application that is hosted by the BS and requested by UEs. Example services include video streaming, mobile gaming and augmented reality. Running a particular service requires caching the associated data, such as required libraries and databases, on the BS. We assume that there are totally $K$ possible computing services, indexed by $\mathcal{K}$. Different services require different processor power, memory, and storage resources. However, due to the limited computing resources on an individual BS, not all $K$ services can be cached at the same time. Given the computing resources (on various dimensions) of SC $n$, the feasible sets of services that can be cached on SC $n$ simultaneously can be easily derived. Let $\mathcal{F}_n$ be the set of feasible sets of services for SC $n$ and each element in $\mathcal{F}_n$ is a feasible set of services. As a simple example, consider that each SC can only offer one kind of service at a time and hence, $\mathcal{F}_n = \{\{1\}, \{2\}, ..., \{K\}\}$ for all $n$ in this case.

The network of SCs periodically update their service caching decisions according to the predicted/reported service demand of the UEs. We focus on the decision problem for one such period, which is much longer than the timescale of computation offloading (i.e. packet transmission + task processing). Let $a_n \in \mathcal{F}_n$ be the caching decision of SC $n$ and $\a = (a_1, ..., a_N)$ collects the caching decisions of all SCs. The service demand of UE $m$ in the current period is denoted by a vector $\d_m = (d_m^1, ..., d_m^K)$ where $d_m^k = (\lambda_m^k, \gamma_m^k)$ is a tuple capturing the input data size $\lambda_m^k$ (in bits) and the computation workload $\gamma_m^k$ (in processor cycles) of demand requiring service $k$.

\subsection{Utility Model}
In SC networks, UEs derive benefits from completing their computation tasks. Let $u^k_m$ be the benefit that UE $m$ receives from tasks requiring service $k$. Meanwhile, costs are incurred in order to complete these tasks, which come from two major sources: cost due to transmitting the data of these tasks from UEs to BSs and cost due to processing the computation workload of these tasks. We model them in more detail as follows.

\subsubsection{Transmission cost}
Let $P_m$ denote the transmission power of UE $m$, then the achievable uplink transmission rate between UE $m$ and BS $n$ can be calculated using the Shannon capacity
\begin{align}
r_{m,n} = W\log_2\left(1 + \frac{P_m H_{m,n}}{N_0}\right)
\end{align}
where $W$ is the channel bandwidth and $N_0$ is the noise power. Therefore, the transmission energy consumption of UE $m$ for sending $\lambda_{m,n}$ bits of input data to BS $n$ is $P_m \lambda_{m,n}/r_{m,n}$. The total energy consumption of UE $m$ is thus
\begin{align}\label{TXcost}
C^{\text{tx}}_m = P_m \sum_{n \in \mathcal{N}_m} \lambda_{m,n}/r_{m,n}
\end{align}
where $\lambda_{m,n}$ is the amount of data sent to BS $n$ from UE $m$, which is jointly determined by the SC service caching decisions and UE-SC association as will be elaborated later in this paper. Notice that we neglect the transmission cost for BSs to send service outcome back to the UEs due to the fact that the size of service outcome is much smaller than that of service input data. 

\subsubsection{Computation cost} \label{comcost}
If BS $n$ caches service $k$, then the received workload requiring service $k$ can be processed at BS $n$, thus incurring computation cost, e.g. energy consumption; otherwise, the computation is further offloaded to the cloud, thus incurring cloud usage cost, e.g., cloud service fee and Internet round-trip delay. Let $c_n$ be the unit computation consumption of BS $n$ and $c_0$ be the unit cloud usage cost. We assume that $c_0 > c_n, \forall n$ so that processing at the network edge is preferred compared to offloading to the remote cloud. Let $\gamma_{m,n}, n \in\mathcal{N}$ and $\gamma_{m,0}$ denote the workload processed for UE $m$ by BS $n$ and cloud, respectively, which are related to the SCs' service caching decision and UE-SC association. Therefore, we have the computation cost of SC $n$, $C^{\text{com}}_{m,n}$, and cloud, $C^{\text{com}}_{m,0}$, for executing the service demand from UE $m$:
\begin{align}\label{PXcost}
C^{\text{com}}_{m,n} = c_n \gamma_{m,n}, \forall n\in\mathcal{N} \quad \text{and} \quad C^{\text{com}}_{m,0} = c_0 \gamma_{m,0}
\end{align}

%We further write $v^k_m\in\mathcal{V}\triangleq\{\mathcal{N}\cup 0\}$ as the serving SC/cloud for UE $m$'s demand $d^k_m$, therefore, the variable $\gamma_{m,i}$ can be obtained as follows:
%\begin{align}
%\gamma_{m,i} = \sum_k \gamma^k_m \cdot \textbf{1}\{v^k_m=i\}
%\end{align}
%where $\mathbf{1}\{\cdot\}$ is the indicator function and $\gamma^k_m$ is the amount of workload requiring service $k$ from UE $m$.

% \subsubsection{Payment scheme}\label{PWR}

We consider a \textit{pay-to-work} service provision framework where SCs are paid by UEs to process their service requests. Without loss of generality, we employ an intuitive payment scheme: if a SC or cloud offers services to UE $m$, then UE $m$ must pay the SC or cloud to cover the incurred computation cost. Let $C^{\text{com}}_{m}$ be the total payment of UE $m$, we have
\begin{align}
C^{\text{com}}_{m} = \sum_{n\in\mathcal{N}} C^{\text{com}}_{m,n} + C^{\text{com}}_{m,0}
\end{align}

The utility of SC $n$ is obtained by accumulating the differences between benefits and costs (i.e. transmission cost and payments) of its registered UEs. Recall that $u^k_m$ is the benefit UE $m$ receives if the request for service $k$ is satisfied, then utility of SC $n$ can be represented as:
\begin{align}
U_n = \sum_{m\in\mathcal{M}_n} \left(\sum_k u^k_m - C^{\text{com}}_{m} - C^{\text{tx}}_m\right)
\end{align}

\subsection{Non-Collaborative Service Caching}
Before we study collaborative service caching, we first present the non-collaborative service caching problem as a benchmark. In this case, UEs transmit all service request to their home BSs. Therefore, if $n$ is UE $m$'s home BS, $\lambda_{m,n} = \sum_{k}\lambda^k_m$; otherwise, $\lambda_{m,n} = 0$. Moreover, BS $n$ only processes tasks requesting for services cached locally and the remaining tasks will be further offloaded to the cloud server, hence we have $\gamma_{m,n}=\sum_{m\in\mathcal{M}_n}\gamma^k_m \cdot \textbf{1}\{k \in a_n\}$ and $\gamma_{m,0}=\sum_{m\in\mathcal{M}_n}\gamma^k_m \cdot \textbf{1}\{k \not\in a_n\}$. Depending what services $a_n\in\mathcal{F}_n$ are cached on BS $n$, the utility of SC $n$ can be written as:
%\begin{align}
%U_n(a_n) = \sum_{m\in\mathcal{M}_n} & \left( \sum_k u^k_m - P_m \sum_{k}\lambda^k_{m,n}/r_{m,n}\right. \nonumber \\ & \left. - c_n\sum_{k} \gamma^k_m \cdot \textbf{1}\{k\in a_n\} -  c_0\sum_{k} \gamma^k_m \cdot \textbf{1}\{k \not\in a_n\} \right)
%\end{align}
\begin{align*}
U_n(a_n) = \sum_{m\in\mathcal{M}_n} \left( \sum_k u^k_m - P_m \sum_{k}\lambda^k_{m}/r_{m,n} - c_n\sum_{k} \gamma^k_m \cdot \textbf{1}\{k\in a_n\} -  c_0\sum_{k} \gamma^k_m \cdot \textbf{1}\{k \not\in a_n\} \right)
\end{align*}
Notice that $\textbf{1}\{k\in a_n\}+\textbf{1}\{k \not\in a_n\} \equiv 1$, the utility function of SC $n$ can be rewritten as:
\begin{align}\label{utility_rew}
U_n(a_n) = \sum_{m\in\mathcal{M}_n} \left( \sum_k u^k_m - P_m \sum_{k}\lambda^k_{m}/r_{m,n} - (c_n-c_0) \sum_{k} \gamma^k_m \cdot \textbf{1}\{k\in a_n\} - c_0 \sum_{k} \gamma^k_m \right) 
\end{align}

Each SC $n$ aims to maximize its utility by optimizing its service caching decision, i.e. $\min_{a_n \in \mathcal{F}_n} U_n(a_n)$. Realizing that the UEs' benefit and transmission cost do not depend on the service caching decision in the non-collaborative case. $(c_n-c_0) \sum_{k} \gamma^k_m \cdot \textbf{1}\{k\in a_n\}$ is the only term that depends on the service caching decision in \eqref{utility_rew}. Recall that $c_0>c_n$, the optimization problem can be further simplified to
\begin{align}
\max_{a_n \in \mathcal{F}_n}~~ \sum_{m\in\mathcal{M}_n} \sum_k \gamma^k_m\cdot \mathbf{1}\{k \in a_n\}
\end{align}

The above problem is not difficult to solve: the optimal solution requires the SC to cache the most popular services (i.e. services with the largest $\sum_{m\in\mathcal{M}_n}\gamma^k_m$). However, non-collaborative SCs may rely heavily on the cloud service, which incurs large latency and cloud usage fee, due to the diversity of user demand and the limited caching capacity of BSs. One promising way to reduce the reliance on cloud is to enable collaboration among SCs. Figure \ref{noncop_cop} illustrates a simple motivating example for collaborative service caching. Consider two overlapping SCs providing two services to their respective UEs, and assume that each SC can only cache one service. Without collaboration (the left figure), each SC caches the most popular service, which is the red service in this case, in order to maximize its own utility. All computation demand for the green service will have to be offloaded to the cloud. With collaboration (the right figure), SC 1 caches the red service while SC 2 caches the green service. Because the coverage areas of these two SCs are highly overlapped, computation demand for the green service from UEs registered to SC 1 can be offloaded to SC 2. Likewise, computation demand for the red service from UEs registered to SC 2 can be offloaded to SC 1. In this way, all computation is retained at the network edge with zero cloud usage. In the next sections, we will study how to perform collaborative service caching to improve the MEC performance. We will first assume that SCs are fully cooperative with the common goal of maximizing the MEC performance of the whole network. Next, we study the problem considering selfish SCs.

\begin{figure}[htb]
	\centering
	% Requires \usepackage{graphicx}
	\includegraphics[width=0.7\textwidth]{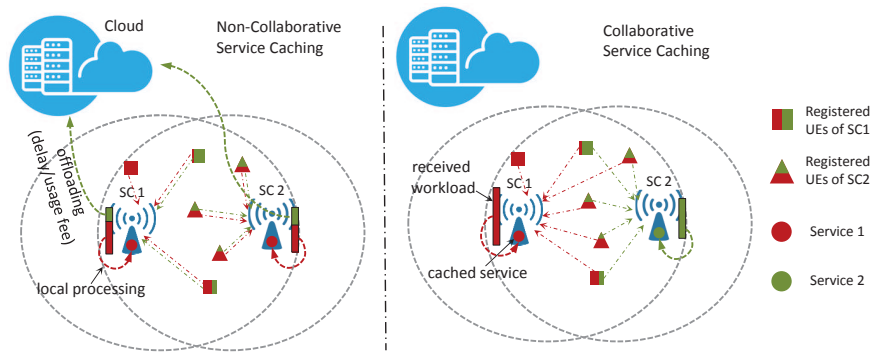}\\
	\vspace{-0.15 in}
	\caption{Comparison of non-collaborative service caching and collaborative service caching}\label{noncop_cop}
	\vspace{-0.1 in}
\end{figure}

\section{Collaborative Service Caching for Obedient SCs}\label{sec_CSC}
In this section, we study collaborative service caching assuming that SCs are obedient and fully cooperative, who have the common goal of maximizing the total utility of the whole network. With collaborative service caching, a UE does not have to send its computation workload to its home BS. Instead, the service caching decisions of the SCs will reshape the workload distribution in the network in order to better utilize the limited computing resources of individual BSs.

\subsection{Problem Formulation}
Given the service caching decisions $\a$ of all SCs, let $\mathcal{B}^k_m(\a) \subseteq \mathcal{N}$ be the set of the SCs that cache service $k$ and are reachable by UE $m$. Since UEs are often battery-powered and have stringent energy constraints, in our considered cooperative system, the demand $d^k_m$ of UE $m$ is always offloaded to the SC in $\mathcal{B}^k_m(\a)$ that has the best uplink channel condition, namely $\arg \max_{n\in\mathcal{B}^k_m(\a)} H_{m,n}$. In this way, the UE incurs the least transmission energy consumption. Notice that our system is also compatible with other UE-SC association strategies. Designing appropriate association strategy helps improve the performance of MEC system, however, this goes beyond the main purpose of our analysis. Interested readers are referred to \cite{urgaonkar2015dynamic} and references therein. It is still possible that none of the SCs that UE $m$ can reach caches service $k$, namely $\mathcal{B}^k_m(\a) = \emptyset$. In this case, UE $m$'s demand $d^k_m$ will be offloaded to the cloud via UE $m$'s home BS. To facilitate the exposition, we write $v^k_m(\a)\in \mathcal{N}$ as the BS to which UE $m$ send its demand for service $k$, $d^k_m$, which can be determined as follows:
\begin{equation}
v^k_m(\a) = \left\{
\begin{array}{ll}
\arg\max_{n \in \mathcal{B}^k_m(\a)} H_{m, n}, &\text{if}~~\mathcal{B}_m^k(\a) \neq \emptyset\\
n_m, &\text{if}~~\mathcal{B}_m^k(\a) = \emptyset
\end{array}
\right.
\end{equation}
where $n_m$ is the home BS of UE $m$. Therefore, the amount of data $\lambda_{m,n}$ sent from UE $m$ to BS $n$ can be determined as
\begin{align}
\lambda_{m,n} = \sum_{k}\lambda^k_m \cdot\mathbf{1}\{v^k_m(\a)=n\}
\end{align}
and the amount of workload processed by BS $n$, $\gamma_{m,n}$, and cloud, $\gamma_{m,0}$, can be determined as:
\begin{align}
&\gamma_{m,n} = \sum_{k} \gamma^k_m \cdot\mathbf{1}\{v^k_m(\a)=n\}\cdot \mathbf{1}\{k \in a_n\}\\
&\gamma_{m,0} = \sum_{k} \gamma^k_m \cdot\mathbf{1}\{v^k_m(\a)=n_m\}\cdot \mathbf{1}\{k \not\in a_n\}
\end{align}
Substituting $\lambda_{m,n}$ into \eqref{TXcost} and $\gamma_{m,n}, \gamma_{m,0}$ into \eqref{PXcost}, we can then obtain the utility $U_n(\a)$ of SC $n$ which is a function of the service caching configuration $\a$ of all SCs.

The objective of cooperative SC network is to determine the optimal collaborative service caching configuration $\a$ in order to maximize the total utility of the whole network:
$\max\limits_{\a} \sum_n U_n(\a)$, which is equivalent to minimizing the total cost. Define the cost $C_n(\a)$ of SC $n$ is as
\begin{align}
C_n(\a) \triangleq  \sum_{m\in\mathcal{M}_n} \left( P_m \sum_{k}\lambda^k_{m,n}/r_{m,n} + \sum_{n\in\mathcal{N}} c_n \gamma_{m,n} + c_0\gamma_{m,0}\right)
\end{align}
Then the problem of collaborative service  caching for obedient SCs becomes:
\begin{align}\textbf{CSC-O:}~~
\min_{\a = (a_1, ..., a_N)}~~\sum_n C_n(\a)~~~\text{s.t.}~a_n \in\mathcal{F}_n, \forall n
\end{align}

The above problem is a difficult combinatorial optimization problem where the service caching decisions of SCs are highly correlated. Centralized solutions are usually computationally prohibitive and require the knowledge of the entire network, which is difficult to obtain. In the next subsection, we develop an efficient decentralized algorithm to optimize the collaborative service caching decisions based on parallel Gibbs sampler and graph coloring.

%\subsection{SC Grouping}
\subsection{Chromatic Parallel Gibbs Sampler and Graph Coloring}
Our decentralized collaborative service caching algorithm is developed based on the Gibbs sampling (GS) technique \cite{robert2004monte}. In our problem, GS is used to generate the probability distribution of network service caching configuration $\a$ by repeatedly sweeping each SC sampling its service caching decision from conditional distribution with the remaining SCs fixing their service caching decisions. The theory of Markov chain Monte Carlo (MCMC) guarantees that the probability distribution of service caching configuration $\a$ approximated by GS is proportional to $e^{-\sum_nC_n(\a)/\tau}$, where the parameter $\tau$ has the interpretation of temperature in the context of statistical physics. Such a probability measure is known as the Gibbs distribution. Furthermore, performing Gibbs sampling while reducing $\tau$ can obtain service caching configurations resulting in globally minimal cost \cite{geman1984stochastic}. However, convergence results are typically available for the case of sequential sampling (one SC updating at a time) only. The main problem of sequential sampling is that: (1) it takes too long to complete one round of updating for large networks, (2) it works with an additional assumption that the global communication is available, which may not hold in SC networks. One promising solution for this problem is to enable parallelism in Gibbs sampling. It is worth noting that extreme parallelism (i.e. all SCs evolve their decisions at the same time) is often infeasible. As \cite{newman2008distributed} have observed, one can easily construct cases where the parallel Gibbs sampler in which all SCs evolve their actions at the same time is not ergodic and therefore convergence to the Gibbs distribution is not guaranteed. In the following, we design an algorithm, called Chromatic Parallel Gibbs Sampler (CPGS), to transform the sequential sampling into an equivalent parallel sampling by exploiting the special structure of the considered SC network using the theory of Markov Random Field and Graph Coloring.

The sequential GS works by iteratively sampling each SC's service caching decisions according to a joint posterior distribution $p$, i.e.
\begin{align}\label{gibbs_sampler}
    a_i \sim p(a_i | \a_{-i}) \triangleq \frac{\exp(-\sum_n C_n(a_i, \a_{-i})/\tau)}{\sum_{a'\in\mathcal{F}_i}\exp(-\sum_n C_n(a', \a_{-i})/\tau)}, \forall i \in \mathcal{N}
\end{align}
where $\a_{-i}$ refers to the caching decisions of SCs excluding the SC $i$, and \textit{temperature} $\tau$ controls the trade-off between exploitation and exploration. The adopted distribution $p$ will ensure that GS converges to the Gibbs distribution. Intuitively, if the decision update at SC $i$ does not influence sampling distribution $p(a_j|\a_{-j})$ of SC $j$ and vice versa, then SC $i$ and $j$ can update their caching decisions independently and simultaneously. To formalize this intuition, we resort to the Markov Random Field (MRF). The MRF of a distribution $p$ is an undirected graph over the (caching decision variables of) SCs. On this graph, the set of all SCs adjacent to SC $i$, denoted by $\Gamma_i$, is called the Markov Blanket of SC $i$. Given its Markov Blanket, the caching decision of SC $i$ must be conditionally independent of the caching decisions of all other SCs outside $\Gamma_i$, namely
 \begin{align}
 	p(a_i | \a_{\Gamma_i})=p(a_i | \a_{-i})
 \end{align}
Therefore, any two SCs that are not in the Markov Blanket of each other can evolve their decisions simultaneously. The next proposition establishes the connection between the physical network graph and the MRF.

\begin{proposition}
The two-hop neighborhood of SC $i$ on the physical network graph $G$ is the Markov Blanket of SC $i$ on the MRF.
\end{proposition}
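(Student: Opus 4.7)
The plan is to unwind the conditional distribution in \eqref{gibbs_sampler} and identify the minimal set of variables on which it actually depends as a function of $a_i$. Since
\begin{align*}
p(a_i \mid \a_{-i}) \;=\; \frac{\exp\!\big(-\sum_n C_n(a_i,\a_{-i})/\tau\big)}{\sum_{a'\in\mathcal{F}_i}\exp\!\big(-\sum_n C_n(a',\a_{-i})/\tau\big)},
\end{align*}
any additive term in $\sum_n C_n$ that does not involve $a_i$ factors out of the numerator and cancels with the denominator. So the first step is to isolate the set $S_i \subseteq \mathcal{N}$ of SCs $n$ for which $C_n$ depends on $a_i$.

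Next I would argue that $S_i \subseteq \Omega_i \cup \{i\}$. The cost $C_n$ is a sum over UEs $m \in \mathcal{M}_n$ of transmission and computation terms that depend on the service caching profile only through the offloading targets $v^k_m(\a)$ and the indicators $\mathbf{1}\{k \in a_{v^k_m(\a)}\}$. By construction $v^k_m(\a) \in \mathcal{N}_m$, so $C_n$ can depend on $a_i$ only if $i \in \mathcal{N}_m$ for some $m \in \mathcal{M}_n$; but this is precisely the condition that $i$ and $n$ share a common UE, i.e. $n \in \Omega_i \cup \{i\}$ by the definition of the physical graph $G$. Having restricted the sum to $n \in \Omega_i \cup \{i\}$, I would then track which other caching variables $a_j$ appear in $\sum_{n \in \Omega_i \cup \{i\}} C_n$. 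By the same reasoning, $C_n$ depends on $a_j$ only when $j \in \Omega_n \cup \{n\}$, so the variables appearing in this restricted sum lie in $\bigcup_{n \in \Omega_i \cup \{i\}}(\Omega_n \cup \{n\})$, which is exactly the two-hop neighborhood of $i$ in $G$ (together with $i$ itself).

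Combining these two observations, $p(a_i \mid \a_{-i})$ is a function only of $a_i$ and $\a_{\Gamma_i}$, where $\Gamma_i$ denotes the two-hop neighborhood of $i$. Hence $p(a_i \mid \a_{-i}) = p(a_i \mid \a_{\Gamma_i})$, which is the defining property of the Markov Blanket. Minimality (i.e. that no proper subset of $\Gamma_i$ works in general) is intuitively clear because one can construct demand profiles in which every two-hop neighbor of $i$ genuinely influences the conditional; I would either note this briefly or omit it if the paper uses ``Markov Blanket'' simply as a separating set.

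The main obstacle is bookkeeping rather than conceptual: the cost $C_n$ depends on $\a$ indirectly through the argmax-based association rule $v^k_m(\a)$, so one has to be careful that the dependence of $v^k_m$ on $\a_i$ is confined to UEs $m$ with $i \in \mathcal{N}_m$. Once this is spelled out, the two containment steps (one-hop for the relevant $C_n$'s, then one more hop to collect the variables inside them) compose to give precisely the two-hop neighborhood, yielding the proposition.
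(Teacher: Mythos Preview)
Your proposal is correct and follows essentially the same decomposition as the paper: both arguments rest on the observation that $C_n$ depends on $a_i$ only when $n\in\Omega_i\cup\{i\}$, and then compose this one-hop dependence twice to reach the two-hop neighborhood. If anything, your version is more explicit than the paper's in justifying the one-hop claim by tracing through the association rule $v^k_m(\a)$ and the reachability sets $\mathcal{N}_m$, whereas the paper simply asserts that $C_n(a_i,\a_{-i})$ is unaffected by non-neighbors.
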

\begin{proof}
Consider any SC $j$ that is more than two hops away from SC $i$. We need to prove below:
\begin{align}
p(a_i|\a_{-i\backslash\{j\}}, a_j) = p(a_i|\a_{-i\backslash\{j\}}, a'_j), \forall a_j \neq a'_j
\end{align}
Recall that $\Omega_i$ is the one-hop neighborhood of SC $i$ on the physical network graph $G$. The total cost of the network can be divided into two parts
\begin{align}
\sum_n C_n(a_i, \a_{-i}) = \sum_{n\in\Omega_i}C_n(a_i, \a_{-i}) + \sum_{n\not\in\Omega_i}C_n(a_i, \a_{-i})
\end{align}
For any SC $n \in \Omega_i$, SC $j$ is not its one-hop neighbor because SC $j$ is at least three hops away from SC $i$. Therefore, fixing the caching decisions of all SCs except SC $i$ and SC $j$, $C_n(a_i, \a_{-i})$ only depends on the SC $i$'s decision, namely $C_n(a_i, \a_{-i}) = C_n(a_i, \a_{-i\backslash\{j\}})$.
On the other hand, for any SC $n \not\in \Omega_i$, SC $i$ is not its one-hop neighbor and hence, changing SC $i$'s decision does not affect $C_n(a_i, \a_{-i})$, namely $C_n(a_i, \a_{-i}) = C_n(\a_{-i})$. Therefore, 
\begin{align}
\frac{\exp(-\sum_n C_n(a_i, \a_{-i})/\tau)}{\sum_{a'\in\mathcal{F}_i}\exp(-\sum_n C_n(a', \a_{-i})/\tau)} = \frac{\exp(-\sum_{n\in\Omega_i} C_n(a_i, \a_{-i\backslash\{j\}})/\tau)}{\sum_{a'\in\mathcal{F}_i}\exp(-\sum_{n\in\Omega_i} C_n(a', \a_{-i\backslash\{j\}})/\tau)}
\end{align}
The right-hand side is independent of $a_j$, the proposition is proved. It is also obvious that the one-hop neighborhood $\Omega_i$ of SC $i$ on the physical network graph is not its Markov Blanket.
\end{proof}

Proposition 1 implies that when SC $i$ changes its caching decision, the change in the sum cost $\sum_{n \in \Omega_i} C_n(\a)$ of SC $i$'s one-hop neighborhood is the same no matter how a SC $j$ outside of its Markov Blanket also changes its caching decision at the same time. This property is the key to enabling the parallel evolution of the service caching decisions in GS. Based on this result, we divide the set of SCs into $L \leq N$ groups such that no two SCs within a group are in each other's Markov Blanket and hence, SCs within the same group can evolve their decisions in parallel. We would like $L$ to be as small as possible to achieve the maximal level of parallelization.

%Therefore, all SCs in the same time can search for their optimal service caching decisions in parallel.

%Based on this result, we color the SC network with $l$-coloring such that each SC is assigned one of $l$ colors and correlated SCs have different colors. Let $L_i$ denote the SCs in color $i$. Then the chormatic sampler simutaneously draws new decisions for all SCS in $l_i$ Therefore, all SCs in the same time can search for their optimal service caching decisions in parallel.

Finding the minimum value of $L$ is equivalent to a graph coloring problem on the MRF. We color the MRF network with $L$-coloring such that each SC is assigned one of $L$ colors and SCs in the Markov Blanket have different colors. Therefore, the minimum value of $L$ equals the chromatic number of the graph MRF. Let $l_i$ denote the set of SCs in color $i$. Then CPGS simultaneously draws new decisions for all SCs in $l_i$ before proceeding to $l_{i+1}$. The colored network ensures that all SCs within a color are conditionally independent of the SCs in the remaining colors and therefore can be sampled in parallel. Constructing the minimal coloring of a general network is NP-Complete. However, for simple models the optimal coloring can be quickly derived using existing graph coloring algorithms \cite{kubale2004graph}.

\textbf{Example}: Figure \ref{MRF} illustrates the relationship between the physical network graph, the MRF, the Markov blanket and the coloring result for two typical network topologies, namely circle and star. For the circle network with $N = 6$, the minimum number of SC groups is $L = 3$. For the star network with $N = 9$, the minimum number of SC groups is $L = 5$. In both cases, $L$ is significantly smaller than $N$.

\begin{figure}[htb]
  \centering
  % Requires \usepackage{graphicx}
  \includegraphics[width=0.7\textwidth]{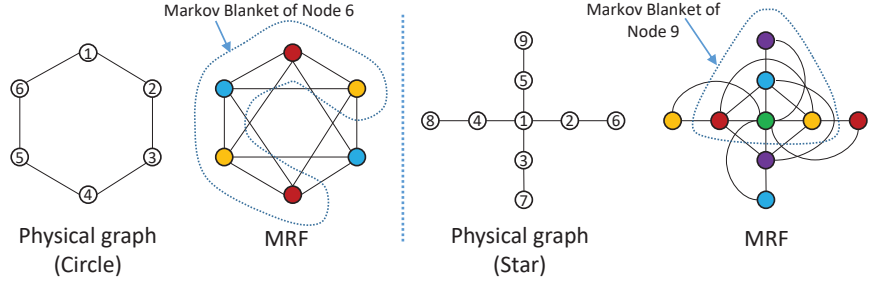}\\
  \vspace{-0.15 in}
  \caption{Illustration of MRF, Markov Blanket, and coloring.}\label{MRF}
  \vspace{-0.18 in}
\end{figure}

\subsection{Decentralized Algorithm Based On Chromatic Parallel Gibbs Sampling}
Now we are ready to present the decentralized algorithm to optimize the collaborative caching decisions (the pseudo-code is provided in Algorithm \ref{dis_alg_p2}). The algorithm works distributedly in an iterative manner as illustrated in Fig. \ref{alg_illu}. In each iteration $t$, a colorset $l_j$ is chosen according to a prescribed order. Every SC $i \in l_j$ in this colorset goes through two steps: decision update and communication. To perform decision update, SC $i$ needs two pieces of information: (1) the service demand patten of its one-hop neighbors, which is exchanged at the beginning of each caching decision period; (2) the current service caching decision of SCs within its \textit{Markov Blanket} $\Gamma_i$, which are collected in the previous iterations. With this information, SC $i$ is able to compute the total cost of its one-hop neighborhood $\sum_{n \in \Omega_i}C_n(a_i, \a_{\Gamma_i}(t))$ locally for every possible caching decision $a_i \in \mathcal{F}_i$ while fixing the caching decisions of the SCs in $\Gamma_i$. Then, the BS samples a new service caching $a_i(t+1) = a_i$ based on the following probability distribution:
\begin{align}\label{samp_dist}
p\left(a_i \mid \a_{\Gamma_i}(t) \right) = \frac{\exp(-\sum_{n \in \Omega_i}C_n(a_i, \a_{\Gamma_i}(t))/\tau)}{\sum_{a'_i\in\mathcal{F}_i}\exp(-\sum_{n \in\Omega_i} C_n(a'_i, \a_{\Gamma_i}(t))/\tau)}, \forall a_i \in \mathcal{F}_i
\end{align}
The above equation implies that an action is selected with a higher probability if it leads to a lower total cost of the one-hop neighborhood. After the service caching decision $a_i$ is updated, the chosen SCs communicate new service caching decisions to the SCs in $\Gamma_i$, which prepares for the subsequent iterations. Notice that during the iterations, SCs do not need to actually change their service caching decision, which is only needed after the completion of the algorithm. 

Next, we formally prove the convergence and optimality of our algorithm.
\begin{figure}[htb]
	\centering
	\includegraphics[width=0.95\linewidth]{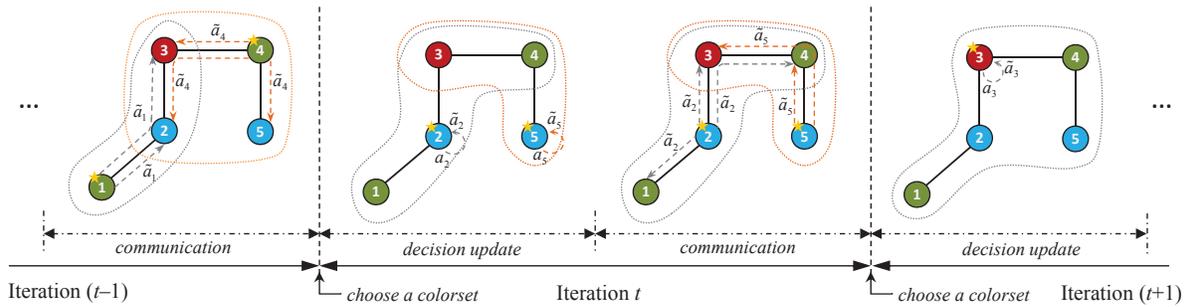}
	\vspace{-0.15 in}
	\caption{Illustration of decentralized collaborative service caching algorithm. The figure depicts a physical SC network; the \textit{star sign} denotes the chosen SCs in each iteration to update their service caching decision.}
	\label{alg_illu}
\end{figure}

\begin{algorithm}[htb]
\caption{Decentralized Collaborative Service Caching}
\begin{algorithmic}[1]
\STATE \textbf{Input}: Physical network graph $G$, UE demand $\d$.
\STATE Construct correlation graph $G^c$ based on $G$;
\STATE Determine $L$ groups of uncorrelated SCs using graph coloring algorithm;
\FOR {each iteration $t$}
\STATE Pick a group of uncorrelated SCs $l_j$ according to some prescribed order;
\FOR {each SC $i \in l_j$ in the picked group}
\FOR {each feasible caching decision $a_i \in \mathcal{F}_i$}
\STATE Determine the workload distribution of UEs in $\Gamma_i$;
\STATE Compute the total cost $\sum_{n \in \Omega_i}C_n(a_i, \a_{\Gamma_i}(t))$ of the one-hop neighborhood $\Omega_i$;
\ENDFOR
\STATE Set $a_i(t+1) \gets a_i$ with probability \eqref{samp_dist} and communicate $a_i(t+1)$ to SCs in $\Gamma_i$;
\ENDFOR
\FOR {each SC $i \not \in l_i$ not in the picked group}
\STATE Set $a_i(t+1) \gets a_i(t)$;
\ENDFOR
\STATE Stop iteration if the sampling probability converges; %\textit{Stopping Criteria}
\ENDFOR
\STATE \textbf{Return}: Collaborative service caching configuration $\a$.
\end{algorithmic}\label{dis_alg_p2}
\end{algorithm}

\vspace{-0.1 in}
\begin{proposition} [Convergence] \label {theo_converge}
The Chromatic Parallel Gibbs Sampler converges from any starting configuration to Gibbs distribution $\pi(\a)$, where
\begin{align}
\pi(\a)=\dfrac{e^{-\sum_{n\in\mathcal{N}}C_n(\a)/\tau}}{\sum_{\a\in\mathcal{F}_1 \times \dots \times \mathcal{F}_N} e^{-\sum_{n\in\mathcal{N}}C_n(\a)/\tau}}
\end{align}
\end{proposition}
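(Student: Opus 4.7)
The plan is to show that one full sweep of CPGS (sequentially visiting the $L$ color classes, but updating in parallel within each class) defines an ergodic Markov chain on the joint configuration space $\mathcal{F}_1 \times \cdots \times \mathcal{F}_N$ whose unique stationary distribution is $\pi$. I would first verify irreducibility and aperiodicity: because the sampling probability in \eqref{samp_dist} is strictly positive for every $a_i \in \mathcal{F}_i$ (the exponential is never zero), any configuration can reach any other configuration in finitely many sweeps, and self-loops at any state occur with positive probability, giving aperiodicity. Thus, once I exhibit a stationary distribution, it is the unique limit.

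The core step is to show that $\pi$ is stationary for the transition kernel of a single color class $l_j$. Here I would lean heavily on Proposition 1. Fix a color class $l_j$ and let $K_{l_j}$ denote the transition kernel that resamples all $a_i$ for $i \in l_j$ in parallel, according to \eqref{samp_dist}, while freezing $\a_{-l_j}$. The key observation is that any two SCs $i, i' \in l_j$ are outside each other's Markov blanket (by construction of the coloring on the MRF), so by Proposition 1 the full conditionals satisfy
\begin{align}
p(a_i \mid \a_{\Gamma_i}) = p(a_i \mid \a_{-i}),
\end{align}
and moreover neither $a_i$ nor $a_{i'}$ appears in the other's conditioning set. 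Consequently the joint parallel update factorizes as
\begin{align}
\Pr\bigl(\a_{l_j}(t+1) = \b_{l_j} \,\big|\, \a_{-l_j}(t) = \a_{-l_j}\bigr) = \prod_{i \in l_j} p\bigl(b_i \mid \a_{-i}\bigr),
\end{align}
which is exactly the transition kernel produced by visiting the SCs in $l_j$ one after another in any order in the classical sequential Gibbs sampler. Since each single-site Gibbs update preserves $\pi$, and a composition of kernels that all preserve $\pi$ preserves $\pi$, the kernel $K_{l_j}$ has $\pi$ as a stationary distribution.

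Composing across color classes, one full sweep $K = K_{l_1} K_{l_2} \cdots K_{l_L}$ also preserves $\pi$, because stationarity is preserved under composition. Combined with ergodicity established above, the Perron--Frobenius-type convergence theorem for finite-state Markov chains (or Dobrushin's standard argument for MCMC) yields $\Pr(\a(t) = \a) \to \pi(\a)$ for every starting configuration, which is the desired conclusion. The main obstacle I anticipate is the factorization step: one must be careful that the arguments inside each $p(\cdot \mid \a_{-i})$ really are all fixed when evaluating the joint update, and this is precisely where the coloring condition (no two SCs in $l_j$ are in a common Markov blanket) is indispensable. Without that, the parallel kernel would in general not coincide with the composition of single-site Gibbs kernels and stationarity could fail, as the extreme-parallel counterexamples from \cite{newman2008distributed} already warn.
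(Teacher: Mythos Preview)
Your proposal is correct and follows essentially the same approach as the paper: both arguments hinge on the observation that, because SCs within a color class lie outside one another's Markov blankets, the parallel update within a color class coincides exactly with a sequential sweep over those SCs, so CPGS inherits the convergence of the ordinary sequential Gibbs sampler. The paper compresses this into a direct reduction to the classical Geman--Geman convergence lemma for sequential Gibbs sampling (by exhibiting the equivalent sequential visiting order $l_{1,1},l_{1,2},\dots,l_{L,|l_L|},\dots$), whereas you spell out the stationarity-plus-ergodicity argument explicitly; the underlying mechanism is the same.
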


\begin{proof}
	The convergence theorem of CPGS can be easily derived from the classic convergence of sequential Gibbs sampler \cite{geman1984stochastic} presented below:
	\begin{lemma}[Convergence of sequential Gibbs sampler] \label{conv_seq_Gib}
		Assume that each SC $n\in\mathcal{N}$ is sampled by the sequential Gibbs sampler with the sequence $\{n_t, t\geq 1\}$ containing SC $n$ infinitely often. Then, for every starting configuration $\a(0)$ and every $\a$, we have $\lim_{t\to\infty} P\left(\a(t)=\a \mid \a(0)\right)=\pi(\a)$ where $\a(0), \a \in \mathcal{F}_1 \times \dots \times \mathcal{F}_N$.
	\end{lemma}
   \begin{proof}
    	The proof of this Lemma \ref{conv_seq_Gib} can be found in \cite{geman1984stochastic}, thus is omitted here.
   \end{proof}

    For sequential Gibbs sampling, we can construct a sampling sequence as:
    \begin{align}
     \{n_t, t \geq 1\}:~\underbrace{l_{1,1},l_{1,2},\dots}_{|l_1|},\underbrace{l_{2,1},l_{2,2},\dots}_{|l_2|},\dots,\underbrace{l_{L,1},l_{L,2},\dots}_{|l_L|},\dots
    \end{align}
    where $l_{j,i}$ denotes $i$-th BS in colorset $l_j$. Since the convergence of sequential Gibbs sampling is guaranteed given any prescribed order as long as every SC is visited infinitely often, we know that sampling with the constructed order converges to the Gibbs distribution. Recall that in CPGS the caching decision updates are independent for SCs sharing the same color, therefore, SCs with the same color can update the caching decision at the same time. This also ensures the convergence to the Gibbs distribution.
\end{proof}

By extending Proposition \ref{theo_converge}, we can show the algorithm also guarantees that the service caching configuration $\a$ converges to the optimal solution of \textbf{CSC-O} with an appropriate temperature $\tau$.

\begin{proposition} [Optimality] \label {theo_optim}
	As $\tau$ decreases, the algorithm converges with a higher probability to the optimal solution of \textbf{CSC-O}. When $\tau \to 0$, the algorithm converges to the global optimal solution with probability 1.
\end{proposition}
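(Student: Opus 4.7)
The plan is to build directly on Proposition~\ref{theo_converge}, which guarantees that under CPGS the configuration $\a$ is asymptotically drawn from the Gibbs distribution $\pi(\a)$ at temperature $\tau$. Given this, the only remaining task is to analyze the limiting behavior of $\pi(\a)$ as $\tau \to 0$ and show that its mass concentrates on the set of global minimizers of $\sum_{n\in\mathcal{N}} C_n(\a)$.

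For the concentration argument, let $C^* = \min_{\a} \sum_n C_n(\a)$ and let $\mathcal{A}^* = \{\a : \sum_n C_n(\a) = C^*\}$ denote the set of globally optimal configurations for \textbf{CSC-O}. I would factor $e^{-C^*/\tau}$ out of both the numerator and the denominator of $\pi(\a)$ to obtain
\[
\pi(\a) = \frac{e^{-(\sum_n C_n(\a) - C^*)/\tau}}{\sum_{\a'} e^{-(\sum_n C_n(\a') - C^*)/\tau}}.
\]
For every $\a \notin \mathcal{A}^*$ the exponent in the numerator is strictly negative and so the numerator vanishes as $\tau \to 0$, while for $\a \in \mathcal{A}^*$ it equals $1$; the denominator therefore tends to $|\mathcal{A}^*|$. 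Hence $\pi(\a) \to 1/|\mathcal{A}^*|$ for $\a\in\mathcal{A}^*$ and $\pi(\a)\to 0$ otherwise, so $\sum_{\a\in\mathcal{A}^*}\pi(\a) \to 1$.

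For the monotonicity part (smaller $\tau$ yields a higher optimality probability), I would fix any $\a^*\in\mathcal{A}^*$ and write $\pi_\tau(\a^*) = 1/\bigl(|\mathcal{A}^*| + \sum_{\a'\notin\mathcal{A}^*} e^{-(\sum_n C_n(\a') - C^*)/\tau}\bigr)$. Each summand in the denominator is of the form $e^{-\Delta/\tau}$ with $\Delta>0$, which is strictly decreasing as $\tau$ decreases. Summing over the optimal set then shows $\sum_{\a\in\mathcal{A}^*}\pi_\tau(\a)$ is monotonically increasing as $\tau$ decreases, which is the ``higher probability'' claim.

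The step I expect to be least routine is the interplay between the fixed-$\tau$ convergence guaranteed by Proposition~\ref{theo_converge} and the $\tau \to 0$ limit. I would handle it by applying the two limits in sequence: for any fixed $\tau$, the sampler converges to $\pi_\tau$ by Proposition~\ref{theo_converge}; taking $\tau \to 0$ afterwards forces the long-run probability of an optimal $\a^*$ to exceed $1-\varepsilon$ for arbitrary $\varepsilon>0$, yielding the stated ``probability~1'' optimality in the limit. A stronger annealing-schedule version, where $\tau$ is lowered along the iterations, would follow from the sequential result in \cite{geman1984stochastic}, but is not required for the statement as written.
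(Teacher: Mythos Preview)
Your proposal is correct and follows essentially the same approach as the paper: factor $e^{-C^*/\tau}$ out of $\pi(\a)$ and take the $\tau\to 0$ limit so that mass concentrates on the minimizers. If anything you are more careful than the paper, which neither treats the possibility of multiple optima explicitly nor supplies the monotonicity-in-$\tau$ argument you give.
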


\begin{proof}
	Let $\tilde{C}^{opt} = \min_\a \sum_{n\in\mathcal{N}} C_n(\a)$, we rewrite $\pi(\a)$ as
    \begin{align}
       \hat{\pi}(\a)=\dfrac{e^{-\left(\sum_{n\in\mathcal{N}}C_n(\a)-\tilde{C}^{opt}\right)/\tau}}{\sum_{\a\in\mathcal{F}_1 \times \dots \times \mathcal{F}_N}e^{-\left(\sum_{n\in\mathcal{N}}C_n(\a)-\tilde{C}^{opt}\right)/\tau}}
    \end{align}
 where $\hat{\pi}(\a)$ is obtained from $\pi(\a)$ by multiplying both its denominator and numerator by $e^{\tilde{C}^{opt}/\tau}$. As $\tau \to 0$, $e^{-\left(\sum_{n\in\mathcal{N}}C_n(\a)-\tilde{C}^{opt}\right)/\tau}$ approaches 1 if $\sum_{n\in\mathcal{N}}C_n(\a)=\tilde{C}^{opt}$ and approaches 0 otherwise. As a result, the optimal caching decision $\a^{opt} = \arg\min _{\a} \sum_{n\in\mathcal{N}} C_n(\a)$ will be sampled with probability 1.
\end{proof}
%\begin{corollary}\textbf{(Ergodic theorem)}
%	For the defined cost $\tilde{C}(\a)=\sum_{n\in\mathcal{N}}C_n(\a)$, its expectation exists:
%	\begin{align}
%		\lim_{T\to\infty}\dfrac{1}{T}\sum_{t=1}^{T} \tilde{C}(\a^t) \xrightarrow{\text{a.s.}} E_\pi [\tilde{C}(\a)]
%	\end{align}
%\end{corollary}
Following the classic result of parallel computing in \cite{bertsekas1989parallel}, we can analyze the time complexity of proposed algorithm:
\begin{proposition} [Time complexity]\label{prop_runtime}
	Given a $L$-coloring of the SC network, CPGS generates a new joint service caching decision in runtime $O(L)$.
\end{proposition}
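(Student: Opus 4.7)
The plan is to decompose the runtime of one joint decision update into the cost within a single color class and the cost across color classes, and to exploit the fact that Proposition~1 together with the graph coloring construction makes the within-class work fully parallelizable. By ``joint service caching decision'' I interpret one complete sweep of Algorithm~\ref{dis_alg_p2} through all $L$ color classes $l_1,\dots,l_L$, after which every SC has drawn exactly one new sample from its conditional distribution in \eqref{samp_dist}.

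First I would argue that the wall-clock cost of processing a single color class $l_j$ is $O(1)$ in the parameter $L$. By construction of the $L$-coloring on the MRF, any two SCs $i,i'\in l_j$ lie outside each other's Markov blanket; by Proposition~1 their conditional sampling distributions $p(a_i\mid\a_{\Gamma_i})$ and $p(a_{i'}\mid\a_{\Gamma_{i'}})$ depend on disjoint pieces of the fixed configuration $\a_{-(l_j)}$, so the samples can be drawn simultaneously in parallel. The per-SC work consists of enumerating $a_i\in\mathcal{F}_i$, evaluating $\sum_{n\in\Omega_i}C_n(a_i,\a_{\Gamma_i}(t))$ from locally stored data (UE demand of one-hop neighbors and current $\a_{\Gamma_i}$), and drawing one categorical sample according to \eqref{samp_dist}. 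Each of these quantities is bounded by the maximum one-hop neighborhood size, the maximum feasible set size $|\mathcal{F}_n|$, and the service count $K$, all of which are treated as constants with respect to $L$ in the parallel computation model of \cite{bertsekas1989parallel}. Hence each color class contributes $O(1)$ time.

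Next I would argue that the $L$ color classes must be processed sequentially, because the sampling distribution used for SCs in $l_{j+1}$ conditions on the up-to-date caching decisions of their Markov blanket, which in general contains SCs from classes $l_1,\dots,l_j$ (this is precisely why extreme parallelism across all SCs fails, as noted just before the CPGS construction). Adding up the $L$ sequential rounds of $O(1)$ work yields a total wall-clock runtime of $O(L)$ per joint update, as claimed.

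The only delicate point, and the piece I expect to be the main obstacle to making rigorous, is the underlying computational model: one has to declare that $|\Omega_i|$, $|\mathcal{F}_i|$, $K$, and the cost of one arithmetic operation on $C_n$ are all $O(1)$ in $L$, so that ``parallel'' truly means constant time per color class. Everything else is a direct bookkeeping consequence of the chromatic structure and of Proposition~1, and no additional probabilistic argument is needed beyond what was already used in Proposition~\ref{theo_converge}.
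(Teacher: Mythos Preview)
Your proposal is correct and takes essentially the same approach as the paper: both argue that each of the $L$ color classes contributes $O(1)$ wall-clock time because the SCs within a class sample in parallel (one processor per SC), while the classes must be processed sequentially, yielding $O(L)$ total. The paper packages this via the parallel-computing formula $O\bigl(\sum_{i=1}^{L}\lceil |l_i|/\rho\rceil\bigr)$ from \cite{bertsekas1989parallel} with $\rho=|l_i|$, whereas you spell out the identical reasoning directly from the chromatic structure and Proposition~1.
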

\vspace{-0.15 in}
\begin{proof}
	From \cite{bertsekas1989parallel} (Proposition 2.6), we know that the parallel execution of CPGS corresponds exactly to the execution of a sequential scan Gibbs sampler for some permutation over variables. Therefore the running time can be derived as:
    \vspace{-0.1 in}
	\begin{align}
	O\left(\sum_{i=1}^L \left\lceil\dfrac{|l_i|}{\rho}\right\rceil\right) \stackrel{(\dag)}= O\left(\sum_{i=1}^L \dfrac{|l_i|}{|l_i|}\right) = O(L)
	\end{align}
	where $\rho$ is the number of processors. In our case, each SC derives caching decision individually at local edge server, therefore $\rho$ equals the number of SCs in $l_i$, which gives the equality $(\dag)$.
\end{proof}

Proposition \ref{prop_runtime} indicates that CPGS provides a linear speed-up for single-chain sampling, advancing the Markov chain for an $L$-coloring SC network in time $O(L)$ rather than $O(N)$. Typically, $L$ is much smaller than $N$, thereby accelerating the convergence speed of our algorithm.

\section{Collaborative Service Caching for Strategic SCs}
In the previous section, we studied collaborative service caching assuming that SCs are fully cooperative. However selfish SCs will be reluctant to participate in the cooperative network if doing so reduces their individual utility. To formally understand SCs' selfish behavior and the resulting service caching strategies, we use the theoretical framework of the coalitional game to investigate how selfish SCs form coalitions to collaboratively provide edge computing services.

A coalitional game is defined as a tuple $(\mathcal{N}, v)$ where $\mathcal{N}$ is the set of players and $v: 2^\mathcal{N} \to \mathbb{R}$ is a function that assigns for every possible coalition $\mathcal{S} \subseteq \mathcal{N}$ a real number representing the total benefit achieved by coalition $\mathcal{S}$. By evaluating the values of different coalitions, players decide what coalitions to form. In what follows, we first design the interaction within each coalition and define the value function $v(\mathcal{S})$ for any given coalition $\mathcal{S} \subseteq \mathcal{N}$. Then we describe what coalitions are desired in terms of stability and design a distributed coalition formation algorithm.

\vspace{-0.1 in}
\subsection{Value Function and SC Interactions within a Coalition}
In this subsection, we define the value function and describe the interaction among SCs that belong to a given coalition $\mathcal{S}$ (which, however, may not be stable). The value function $v(\mathcal{S})$ of an SC coalition $\mathcal{S}$ is defined as the maximum total utility that can be achieved when collaboration is restricted to SCs in this coalition, i.e.
\begin{align}
v(\mathcal{S}) = \max_{\a_{\mathcal{S}}}\sum_{n\in\mathcal{S}} U_n(\a_{\mathcal{S}})
\end{align}
where $\a_\mathcal{S}$ denotes the collaborative service caching decision of SCs in coalition $\mathcal{S}$. When cooperation is restricted in $\mathcal{S}$, an SC in $\mathcal{S}$ does not offload its UE workload to any SC outside $\mathcal{S}$. By using the collaborative service caching algorithm developed in Section \ref{sec_CSC} (restricted to coalition $\mathcal{S}$), the optimal total utility and hence the value of the coalition can be computed. However, $v(\mathcal{S})$ is achieved assuming that SCs are always cooperative within the coalition, which may not be true. It is entirely possible that some SCs cache popular services for neighbor SCs instead of its favorite services in order to improve utility of the coalition, which makes them unwilling to work in a collaborative manner. In the following, we design an effective algorithm for strategic SCs based on coalitional game such that collaboration is always favorable for SCs to work in the generated coalitions. Before proceeding, we introduce two collaboration patterns:
\subsubsection{Plain Collaboration}
The SC collaboration simply relies on \textit{pay-to-work} framework described in Section \ref{comcost}): each UE pays the BSs and cloud to cover the computation cost incurred by processing its service requests. This payment scheme is relatively weak since UEs only covers the computation cost of SC for processing the service requests and no extra reward for participating in the collaboration.

\subsubsection{Incentivized Collaboration}
Beside the \textit{pay-to-work} framework, SCs receive extra payments by participating in the collaboration. In this case, the values of extra payments are decided by \textit{Proportional Fairness Division} \cite{chen2017socially}: dividing the payoff (i.e. utility improvement) of the whole coalition due to cooperation proportionally to the SC's individual utility achieved without cooperation. Specifically, for SC $n$, its modified utility $\tilde{U}_n$ will be
\begin{align}
\tilde{U}_n = \psi_n\cdot\left(v(\mathcal{S}) - \sum_{i \in \mathcal{S}} v(\{i\})\right) + v(\{n\})
\end{align}

where $v(\{n\})$ represents the utility of SC $n$ if it does not join any coalition (perform non-collaborative service caching), and $\psi_n$ is the proportional weight satisfying $\sum_{n \in \mathcal{S}} \psi_n= 1$, i.e.,
\begin{align}
\psi_n/\psi_j = v(\{n\})/v(\{j\}),\forall n,j\in\mathcal{S}
\end{align}

Based on the proportional fairness division, the extra payment paid/received by SC $n$ can thus be determined as $y_n = \tilde{U}_n(\a_{\mathcal{S}}) - U_n(\a_{\mathcal{S}})$. Here, $\tilde{U}_n$ can be interpreted as the expected utility of SC $n$ by participating in the coalition, while $U_n$ is the actually realized utility. The gap of the two is filled by the extra payment $y_n$. Clearly, the $y_n$ must be cleared within each coalition and hence $\sum_{n \in \mathcal{S}}y_n = 0$. If $y_n > 0$, then SC $n$ pays $y_n$. If $y_n < 0$, then SC $n$ receives $|y_n|$. 

Intuitively, we have extra payment $y_n=0, \forall n$ (i.e., $\tilde{U}_n(\a_{\mathcal{S}}) = U_n(\a_{\mathcal{S}}$)) for SCs with the \textit{Plain Collaboration}. It is expected that \textit{Incentivized collaboration} can better promote cooperation among strategic SCs as will be shown later in the simulation. 

There are two implementation issues for the payment scheme. First, payments need to be properly distributed since multiple SCs may be involved in the transaction. Moreover, direct payment from one SC to another faces fraud risks in the monetary transaction. To enable effective and safe transaction, the edge orchestrator, which is a trusted third-party, collects payments from all SCs and then distributes them to SCs.

\vspace{-0.1 in}
\subsection{Stability of Coalitions}
SCs may form multiple disjoint coalitions and there are many ways that SCs can form coalitions. To characterize what kind of coalitions are preferred by SCs, we introduce the notion of \textit{Stable Coalition}: no SC(s) have incentives to leave the current coalition to form different coalitions. Clearly, the requirement that all SCs in a coalition must receive higher utilities than working individually is a necessary (but not sufficient) condition for stability.

Consider any subset $\tilde{\mathcal{N}} \subseteq \mathcal{N}$ of SCs. We call $\mathcal{P} = \{\mathcal{S}_1,\mathcal{S}_2,..., \mathcal{S}_W\}$ a collection of coalitions formed by SCs in $\tilde{\mathcal{N}}$, where $\mathcal{S}_w \subseteq \tilde{\mathcal{N}}, \forall w$ are disjoint subsets of $\tilde{\mathcal{N}}$. If $\tilde{\mathcal{N}} = \mathcal{N}$, then we call $\mathcal{P}$ a partition of $\mathcal{N}$. We introduce the notion of a defection function $\mathbb{D}$, which associates each possible partition $\mathcal{P}$ of $\mathcal{N}$ with a group of collections. The stability of a partition of $\mathcal{P}$ is defined against a defection function as follows.

\begin{definition}[$\mathbb{D}$-stability]
	A partition $\mathcal{P}$ of $\mathcal{N}$ is $\mathbb{D}$-stable if no SCs are interested in leaving the current partition $\mathcal{P}$ to form a new collection of coalitions $\mathcal{P}' \in \mathbb{D}(\mathcal{P})$. That is, at least one SC in these SCs does not improve its utility by leaving the current partition.
\end{definition}

In other words, a defection function $\mathbb{D}$ restricts the possible ways that SCs may deviate/defect. Two defection functions are of particular interest. The first function, denoted by $\mathbb{D}_c$, associates each partition $\mathcal{P}$ with all possible collections in $\mathcal{N}$. Therefore, it does not put any restriction on the way SCs may deviate and hence is the most general case. The second function, denoted by $\mathbb{D}_{hp}$, associates each partition $\mathcal{P}$ with collections that can be formed by merging or splitting coalitions in $\mathcal{P}$. Therefore, $\mathbb{D}_{hp}$-stability is a weaker notion than $\mathbb{D}_c$-stability. Next, we design a distributed SC coalition formation algorithm that achieves at least $\mathbb{D}_{hp}$-stability.
\vspace{-0.1 in}
\subsection{Distributed Coalition Formation Algorithm for Small-cell network}
Before presenting the SC coalition formation algorithm, we first introduce the notion of \textit{Pareto dominance} to compare the ``quality'' of two collections of coalitions.
\begin{definition}[Pareto Dominance]
	Consider two collections of disjoint coalitions $\mathcal{S}_1$ and $\mathcal{S}_2$ formed by the same subset of SCs $\tilde{\mathcal{N}} \subseteq \mathcal{N}$. $\mathcal{S}_1$ Pareto-dominates $\mathcal{S}_2$, denoted by $\mathcal{S}_1\rhd\mathcal{S}_2$, if and only if $\tilde{U}_i(\a_{\mathcal{S}_1}) \geq \tilde{U}_i(\a_{\mathcal{S}_2}), \forall i \in \tilde{\mathcal{N}}$ with at least one strict inequality for some SC.
\end{definition}

Pareto dominance implies that a group of SCs prefer to form coalitions $\mathcal{S}_1$ rather than $\mathcal{S}_2$. Based on this, we define following two operations, namely \emph{Merge} and \emph{Split} \cite{apt2009generic}, which are central to our coalition formation algorithm:
\begin{itemize}
	\item \textbf{Merge}: merge a set of coalitions $\{\mathcal{S}_1,\dots,\mathcal{S}_l\}$ into a bigger coalition $\bigcup^{l}_{j=1}\mathcal{S}_j$ if  $\{\bigcup^{l}_{j=1}\mathcal{S}_j\}\rhd\{\mathcal{S}_1,\dots,\mathcal{S}_l\}$.
	\item \textbf{Split}: split a coalition $\{\bigcup^{l}_{j=1}\mathcal{S}_j\}$ into a set of smaller coalitions $\{\mathcal{S}_1, \dots, \mathcal{S}_l \}$ if $\{\mathcal{S}_1, \dots, \mathcal{S}_l \}\rhd\{\bigcup^{l}_{j=1}\mathcal{S}_j\}$.
\end{itemize}

By performing Merge, a group of SCs can operate and form a single and larger coalition if and only if this formation increases the utility of at least one SC without decreasing the utility of any other involved SCs. Hence, a Merge decision ensures that all involved SCs agree on its occurrence. Likewise, a coalition can decide to split and divide itself into smaller coalitions if splitting is preferred in the Pareto sense.

\begin{algorithm}[htb]
	\caption{Distributed SC coalition formation}
	\begin{algorithmic}[1]
		\STATE \textbf{Initial}: The SC network is partitioned by $\mathcal{P}=\left\{\{1\},\{2\}\dots,\{N\}\right\}$ with non-collaborative SCs at the beginning of each operational time slot.
		\STATE \textbf{Output}: A stable partition $\mathcal{P}_f=\{\mathcal{S}_1,\mathcal{S}_2,\dots,\mathcal{S}_W\}$ of $\mathcal{N}$; Service decision of SCs in each coalition $\mathcal{S}_w\in \mathcal{P}_f$: $\a=\{\a_{\mathcal{S}_1},\dots,\a_{\mathcal{S}_W}\}$.
		
		%\STATE \textit{\textbf{Phase 1: SC Coalition Formation}}
		\STATE \textbf{Repeat}
	    
		\STATE (a) $\mathcal{P}^{\prime}\leftarrow$ Merge ($\mathcal{P}$):
		    \begin{ALC@g}
		    \STATE Chose a possible set of SC coalitions $\{\mathcal{S}_1,\dots,\mathcal{S}_l\} \subseteq \mathcal{P}$ for attempt of merging into $\mathcal{S}^\prime=\bigcup^{l}_{j=1}\mathcal{S}_j$, where $\mathcal{S}^\prime\in\mathcal{P}^\prime$;
	     	\STATE Solve $v(\mathcal{S}^\prime) = \max\limits_{\a_{\mathcal{S}^\prime}}\sum\limits_{n\in\mathcal{S}^\prime} U_n(\a_{\mathcal{S}^\prime})$ and obtain optimal configuration $\a^{opt}_{\mathcal{S^\prime}}$ with CPGS;
	     	\STATE Derive $\tilde{U}_i, \forall i\in \mathcal{S}^\prime$, and decide whether to merge by examining Pareto dominance;
	     	\end{ALC@g}
		\STATE (b) $\mathcal{P}\leftarrow$ Split ($\mathcal{P}^{\prime}$)
	        \begin{ALC@g}
			\STATE Chose a coalition $\mathcal{S}^\prime\in \mathcal{P^\prime}$ for attempt of splitting into a set of disjoint coalitions  $\mathcal{S}^\prime\to \{\mathcal{S}_1,\dots,\mathcal{S}_l\}$, where $\mathcal{S}_1,\dots,\mathcal{S}_l \in \mathcal{P}$;
			\STATE Solve $v(\mathcal{S}_l) = \max\limits_{\a_{\mathcal{S}_l}}\sum\limits_{n\in\mathcal{S}_l} U_n(\a_{\mathcal{S}_l}), \forall l$ and obtain optimal configuration $\a^{opt}_{\mathcal{S}_l}$ with CPGS;
			\STATE Derive $\tilde{U}_i, \forall i\in \{\mathcal{S}_1,...,\mathcal{S}_l\} $, and decide whether to split by examining Pareto dominance;
	    	\end{ALC@g}
      
		\STATE \textbf{Until} Merge and Split converges to a stable partition $\mathcal{P}_f$;
		\STATE \textbf{Return}  Stable partition $\mathcal{P}_f$ and its service caching decisions $\a=\{\a_{\mathcal{S}_1},\dots,\a_{\mathcal{S}_W}\}$.
     	
%		\STATE \emph{Phase 2: Cooperative computation offloading}
%		\STATE \quad (a) each coalition $S_i\in\mathcal{S}_f$ order its buyers in a way to minimize the offloading cost (maximize \eqref{value_function}).
%		\STATE \textbf{Repeat} for every $S_i\in \mathcal{S}_f$
%		\STATE \quad (b) each buyer in a coalition $S_i\in\mathcal{S}_f$ attempts to acquire computation demands in coalition $S_i$.
%		\STATE \textbf{Until} no peer offloading in the coalition is available.
%		\STATE \quad (c) any buyer who still has unsatisfied computation demand performs SC-to-cloud offloading.
%		\STATE These two stages are repeated periodically to adapt the partition to environmental changes.
		
	\end{algorithmic}\label{alg_coalition_form}
\end{algorithm}

Our coalition formation algorithm for collaborative service caching with strategic SCs (\textbf{CSC-S}) is developed based on the Merge and Split operations, which is presented in Algorithm \ref{alg_coalition_form}. The algorithm iteratively executes the \textit{merge-and-split} operations. Given the current partition $\mathcal{P}$, each coalition $\mathcal{S} \in \mathcal{P}$ negotiates, in a pairwise manner, with neighboring SCs to assess a potential merge. The two coalitions will then decide whether or not to merge. Whenever a Merge decision occurs, a coalition can subsequently investigate the possibility of a Split. Clearly, a Merge or a Split operation is a distributed decision that an SC (or a coalition of SCs) can make. After successive \textit{merge-and-split} iterations, the network converges to a partition composed of disjoint coalitions and no coalition has any incentive to further merge or split. In other words, the partition is \textit{merge-and-split} proof. The convergence of any \textit{merge-and-split} iterations such as the proposed algorithm is guaranteed as shown in \cite{apt2009generic}.

\subsection{Stability Analysis}
The outcome of the above algorithm is a partition of disjoint independent coalitions of SCs. As an immediate result of the definition of $\mathbb{D}_{hp}$ stability, every partition resulting from proposed algorithm is $\mathbb{D}_{hp}$-stable. In particular, no coalitions of SCs in the final partition have the incentive to pursue a different coalition formation through Merge or Split. Next, we investigate whether the proposed algorithm can achieve $\mathbb{D}_{c}$-stability.

A $\mathbb{D}_c$-stable partition has the following properties according to \cite{apt2009generic}. (i) No SCs are interested in leaving $\mathcal{S}$ to form other collections in $\mathcal{N}$ (through any operation). (ii) A $\mathbb{D}_c$-stable partition is the \textit{unique} outcome of any \textit{arbitrary} iteration of merge-and-split if it exists. (iii) A $\mathbb{D}_c$-stable partition $\mathcal{P}$ is a unique $\rhd$-maximal partition, i.e., for all partition $\mathcal{P}^\prime\neq\mathcal{P}$, we have $\mathcal{P}\rhd\mathcal{P}^\prime$. Therefore, the $\mathbb{D}_c$-stable partition provides a \textit{Pareto optimal} utility distribution. However, the existence of a $\mathbb{D}_c$ stable partition is not always guaranteed \cite{apt2009generic}. Nevertheless, we can still have the following result.
\begin{proposition} [Coalition Stability]
	The proposed distributed SC coalition formation algorithm converges to the Pareto-optimal $\mathbb{D}_c$-stable partition, if such a partition exists. Otherwise, the final partition is $\mathbb{D}_{hp}$-stable.
\end{proposition}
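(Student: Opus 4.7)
The plan is to combine two ingredients: (i) a termination/convergence argument for the merge-and-split iteration itself, and (ii) the characterization results for $\mathbb{D}_{hp}$- and $\mathbb{D}_c$-stability from \cite{apt2009generic}. Since the proposition has two clauses, I would prove them separately but lean on the same underlying machinery.

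First, I would argue termination. The set of partitions of $\mathcal{N}$ is finite. Every successful Merge or Split operation is, by construction (line 4c and line 5c of Algorithm \ref{alg_coalition_form}), executed only when the new collection Pareto-dominates the old one, i.e., at least one SC strictly improves $\tilde{U}_i$ while no SC loses. Because the coalition value $v(\mathcal{S})$ is a fixed function of the coalition (the inner maximization via CPGS returns a well-defined $v$), and the proportional fairness division deterministically maps $v$ to the per-SC payoffs $\tilde{U}_i$, the vector of utilities $(\tilde{U}_1,\ldots,\tilde{U}_N)$ strictly increases (in the Pareto sense) after each successful operation. Consequently no partition can be revisited, so the iteration must halt at some partition $\mathcal{P}_f$ after finitely many steps.

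Next, for the $\mathbb{D}_{hp}$-stability clause, I would observe that termination of the algorithm means precisely that no further Merge of any subcollection of $\mathcal{P}_f$ and no further Split of any coalition in $\mathcal{P}_f$ produces a Pareto-improving collection. By definition, $\mathbb{D}_{hp}(\mathcal{P}_f)$ consists exactly of collections obtainable from $\mathcal{P}_f$ by such merges or splits. Hence no SC can strictly improve $\tilde{U}_i$ by moving to any $\mathcal{P}'\in \mathbb{D}_{hp}(\mathcal{P}_f)$ without harming some other SC, which is the definition of $\mathbb{D}_{hp}$-stability. This gives the ``otherwise'' part of the proposition.

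For the $\mathbb{D}_c$-stability clause, I would invoke the uniqueness result from \cite{apt2009generic}: whenever a $\mathbb{D}_c$-stable partition $\mathcal{P}^\star$ exists, it is the unique $\rhd$-maximal partition and, crucially, it is the unique possible outcome of \emph{any} arbitrary merge-and-split iteration regardless of the initial partition and the order in which merges and splits are attempted. Our algorithm starts from the singleton partition $\{\{1\},\ldots,\{N\}\}$ and performs exactly such a merge-and-split iteration, with the convergence established above. Therefore $\mathcal{P}_f = \mathcal{P}^\star$, and $\mathcal{P}_f$ is Pareto-optimal by property (iii) of $\mathbb{D}_c$-stable partitions. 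The main obstacle I anticipate is verifying that the Pareto-dominance criterion used in our Merge/Split steps matches the preference relation assumed in \cite{apt2009generic}, so that their uniqueness theorem applies verbatim; I would address this by explicitly checking that our $\tilde{U}_i$ constructed via proportional fairness division satisfies the comparability required by their framework, and noting that the value $v(\mathcal{S})$ returned by CPGS is a well-defined characteristic function on $2^{\mathcal{N}}$, which is precisely the setting of \cite{apt2009generic}.
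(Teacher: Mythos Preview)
Your proposal is correct and follows essentially the same approach as the paper. The paper's own proof is a one-liner that simply invokes the uniqueness result from \cite{apt2009generic} (that a $\mathbb{D}_c$-stable partition, when it exists, is the unique outcome of any merge-and-split iteration), with termination and $\mathbb{D}_{hp}$-stability taken as immediate from \cite{apt2009generic} and the definition; you spell out the termination argument and the $\mathbb{D}_{hp}$ clause more carefully, but the logical skeleton is identical.
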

\begin{proof}
	The proof is immediate due to the fact that, when it exists, the $\mathbb{D}_c$-stable partition is a unique outcome of any \textit{merge-and-split} iteration \cite{apt2009generic}, such as any partition resulting from our coalition formation algorithm.
\end{proof}

The stability of the grand coalition (e.g. all SCs form a single coalition) is of particular interest in the coalitional game theory. It can be easily shown that the considered SC coalitional game is generally not superadditive and its core is generally empty due to the limitation on BS interaction and hence, the grand coalition is not stable. Instead, independent disjoint coalitions will form. Readers who are interested in more details on the stability of grand coalition in coalitional games are referred to \cite{saad2009coalitional,bogomolnaia2002stability}.

\vspace{-0.1 in}
\subsection{Complexity Analysis}
The complexity of the proposed coalition formation algorithm lies mainly in the complexity of the Merge and Split operations. For a given network, in one Merge operation, each current coalition attempts to merge with other coalitions in a pairwise manner. In the worst case scenario, every SC, before finding a suitable merge partner, needs to make a merge attempt with all other SCs in $\mathcal{N}$. In this case, the first SC requires $N-1$ attempts for Merge, the second requires $N-2$ attempts and so on. The total number of Merge attempts in the worst case is thus $\sum^{N}_{i=1}(N-i)$. In practice, the merge process requires a significantly lower number of attempts since finding a suitable partner does not always require to go through all possible merge attempts (once a suitable partner is identified the merge will occur immediately). The complexity is further reduced due to the fact that SCs do not need to attempt to Merge with physically unreachable SCs. Moreover, after the first run of the algorithm, the initial $N$ non-collaborative SCs will self-organize into larger coalitions. Subsequent runs of the algorithm will deal with a network composed of a number of coalitions that is much smaller than $N$.

For the split operation, in the worst case scenario, splitting can involve finding all the possible partitions of the set formed by the SCs in a single coalition. For a given coalition $\mathcal{S}$, this number is given by the Bell number $\sum^{|\mathcal{S}|}_{k=1} \binom{|\mathcal{S}|}{k}$ which grows exponentially with the number of SCs $|\mathcal{S}|$ in the coalition. In practice, this Split operation is restricted to the formed coalitions, and thus it will be applied to small sets. The split complexity is further reduced due to the fact that, in most scenarios, a coalition does not need to search for all possible split forms. For instance, once a coalition identifies a suitable split structure, the SCs in this coalition will split, and the search for further split forms is not needed in the current iteration.

\section{Simulation}
In this section, we conduct systematic simulations in practical scenarios to verify the efficacy of the proposed algorithms.

\vspace{-0.1 in}
\subsection{Setup}
Our system model adopts the widely-used stochastic geometry approach, homogeneous Poisson Point Process (PPP) \cite{baccelli2010stochastic}, for BS and UE deployment. To be specific, we simulate a 500m$\times$500m square area where a set of BSs are deployed whose locations are chosen according to the PPP with density $\lambda_{\text{BS}}=0.03$. The deployment of UEs follows another PPP with density $\lambda_{\text{UE}}=0.12$. The service demand of each UE is modeled as a Poisson process with arrival rate randomly drawn from 0 to 20. The transmission power of UEs is set as 10 dBm and the channel model follows the free space pathloss model: $20\log_{10}D_{m,n}+32.44$, where $D_{m,n}$  is the distance between UE $m$ and BS $n$. The bandwidth for each UE is $W=20$MHz. The cache capacity of BSs is set to 1 (service) and BSs are allowed to collaborate with other BSs with a range of 150m. The processing cost of BSs are random variables $c_n\in[1, 4]$ and the cost of cloud usage $c_0$ is set as 5. The temperature $\tau$ for chromatic parallel Gibbs sampler is set as 10.

Fig. \ref{deployment} shows the deployed BSs and UEs generated by homogeneous PPP. We have a total of 13 BSs and 72 UEs, and each UE registers to the BS nearest to it. Fig. \ref{demand} depicts the service demand received by BSs from its registered UEs, where the length of each color block denotes the demand for a particular service. The simulation results to be shown in the following are based on this BS/UE deployment and service demand pattern.

\begin{figure*}[htb]
	\begin{minipage}[t]{0.45\linewidth}
		\includegraphics[width=0.85\textwidth]{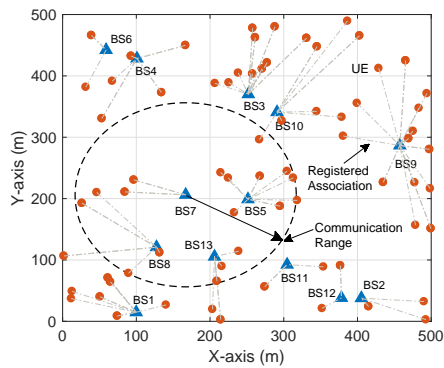}
		\vspace{-0.15 in}
		\caption{BS/UE deployment and registered association}
		\label{deployment}
	\end{minipage}%
	\begin{minipage}[t]{0.55\linewidth}
		\includegraphics[width=0.9\textwidth]{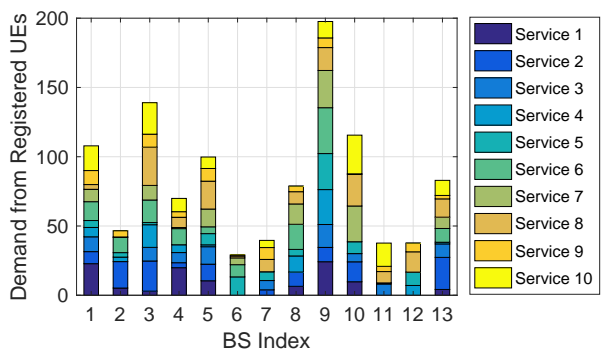}
		\vspace{-0.15 in}
		\caption{Demands of BSs received from registered UEs}
		\label{demand}
	\end{minipage}%
\end{figure*}

\subsection{Utility analysis}
First, we compare the system utilities achieved by three strategies: Non-collaborative service caching (NCOL), Collaborative service caching with obedient SCs (CSC-O) and Collaborative service caching with strategic SCs (CSC-S). For CSC-S, we applied both \textit{Plain Collaboration} (PC) and \textit{Incentivized Collaboration} (IC) schemes to see how they may influence the coalition formation. Fig. \ref{overall_utility} depicts the overall utility of the edge system by applying these four strategies. It can be observed that collaborative strategies dramatically increase the system utility compared to the non-collaborative case, achieving utility improvement by 42.8\% to 57.1\%. Specifically, CSC-O achieves the largest system utility since service caching decisions of all BSs are made to maximize the utility of the whole edge system. However, system-wide optimality may not be preferred for every BS, since not all BSs are guaranteed with utility gain by following the system optimal configuration. To see this, we show the utility achieved by individual BSs in Fig.\ref{AP_utility}. As can be observed, BS 6 and 11 suffer from utility loss compared with NCOL by following the optimal service caching decision derived by CSC-O. By contrast, the utilities of individual BSs achieved by CSC-S are strictly larger than that of NCOL case, which means the collaboration between BSs is favorable even considering the selfishness of SCs. Moreover, the system utility achieved by CSC-S is comparable with CSC-O.
\begin{figure}[htb]
	\centering	
	\includegraphics[width=0.55\textwidth]{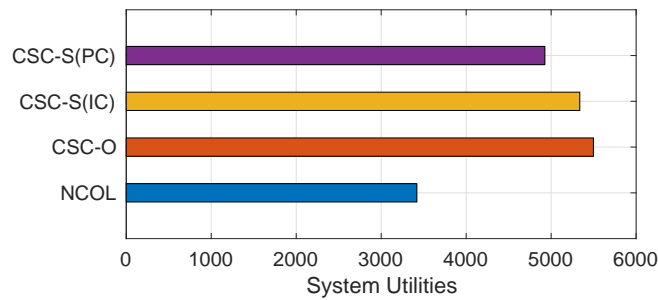}
	\vspace{-0.15 in}
	\caption{Comparison of system utilities}
	\label{overall_utility}
	\vspace{-0.25in}
\end{figure}

\subsection{Workload distribution and service caching decision}
Next, we proceed to see how the proposed algorithm works to benefit the edge computing system. The key idea of SC collaboration is to accommodate more workload at the edge system thereby avoiding the high cost of cloud usage. Fig. \ref{work_distribution} presents the workload distribution of each BS. We see clearly that the three collaborative strategies, compared to the NCOL, allow more workload to be processed within the edge system instead of offloading to the cloud. The main mechanism underlying this advantage is increasing the diversity of cached services among the neighbor BSs. Fig. \ref{serv_caching_decision} depicts the service caching decisions obtained by NCOL, CSC-O, and CSC-S(IC). Notice that, for better illustration, we place the service caching decisions of neighbor BSs adjacent to each other in the figure. By combining the service demand pattern in Fig. \ref{demand}, we can see that NCOL simply caches the services having the highest demand and therefore, the case illustrated in Fig. \ref{noncop_cop} can easily occur. For example, two neighbor BS 5 and BS 7 both cache service 8 in NCOL case. By contrast, the service caching decisions of collaborative strategies exhibit a higher level of diversity among neighboring BSs. 

\begin{figure*}[htb]
	\begin{minipage}[t]{0.5\linewidth}
		\includegraphics[width=0.9\textwidth]{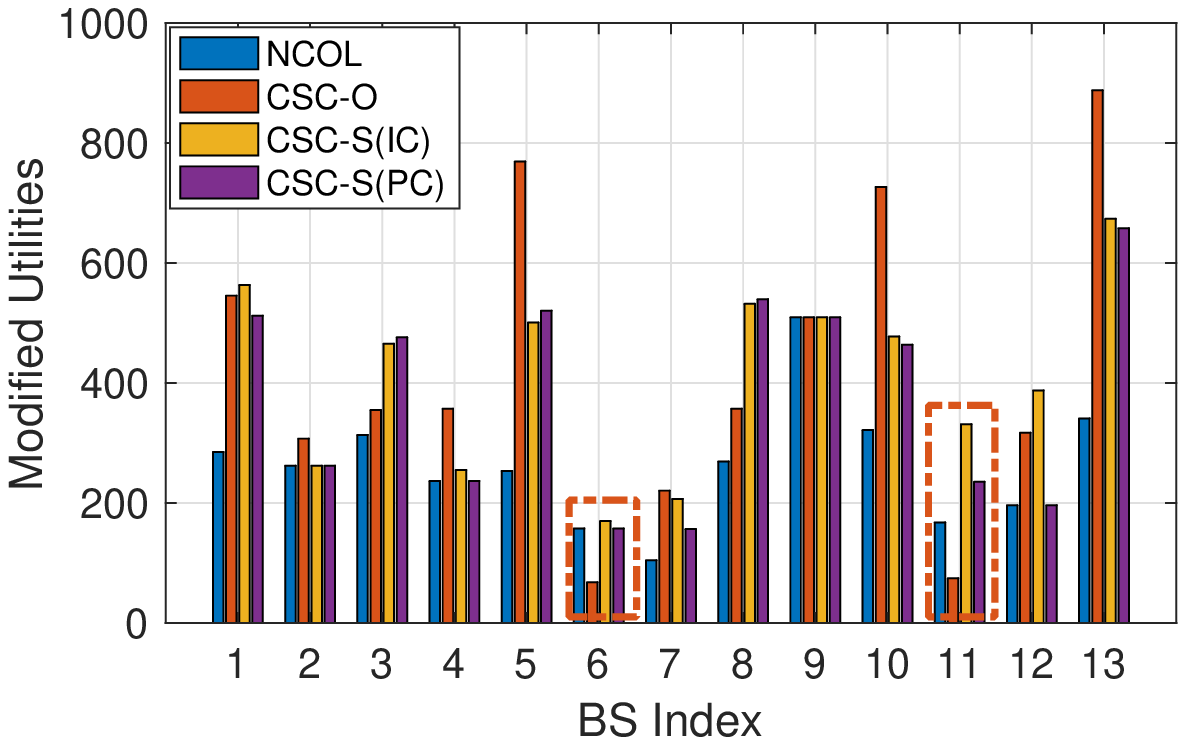}
		\vspace{-0.15 in}
		\caption{Utilities of individual BSs}
		\label{AP_utility}
	\end{minipage}%
	\begin{minipage}[t]{0.5\linewidth}
		\includegraphics[width=0.9\textwidth]{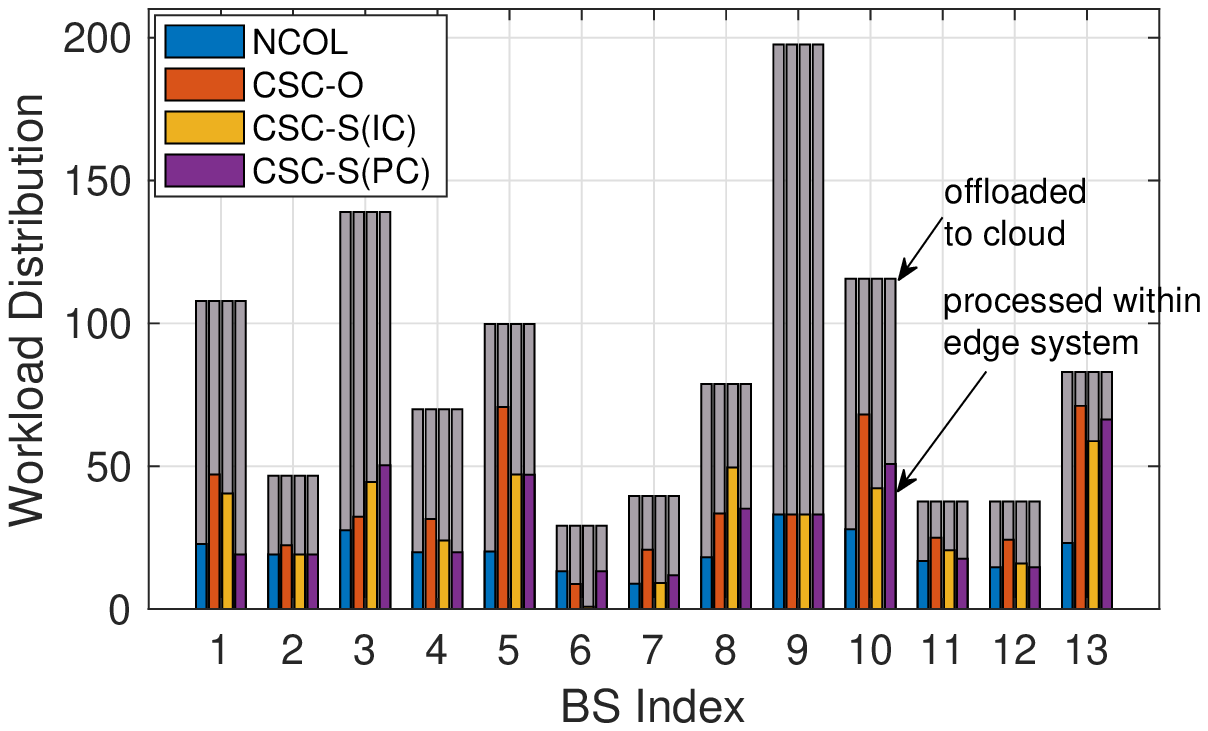}
		\vspace{-0.15 in}
		\caption{Workload distribution}
		\label{work_distribution}
	\end{minipage}%
\vspace{-0.2 in}
\end{figure*}

\vspace{-0.1 in}

\begin{figure}[htb]
	\centering	
    \subfigure[Non-collaborative BS]{\label{a_NCOL}
    	\hspace{-0.2 in}
	\includegraphics[width=0.33\textwidth]{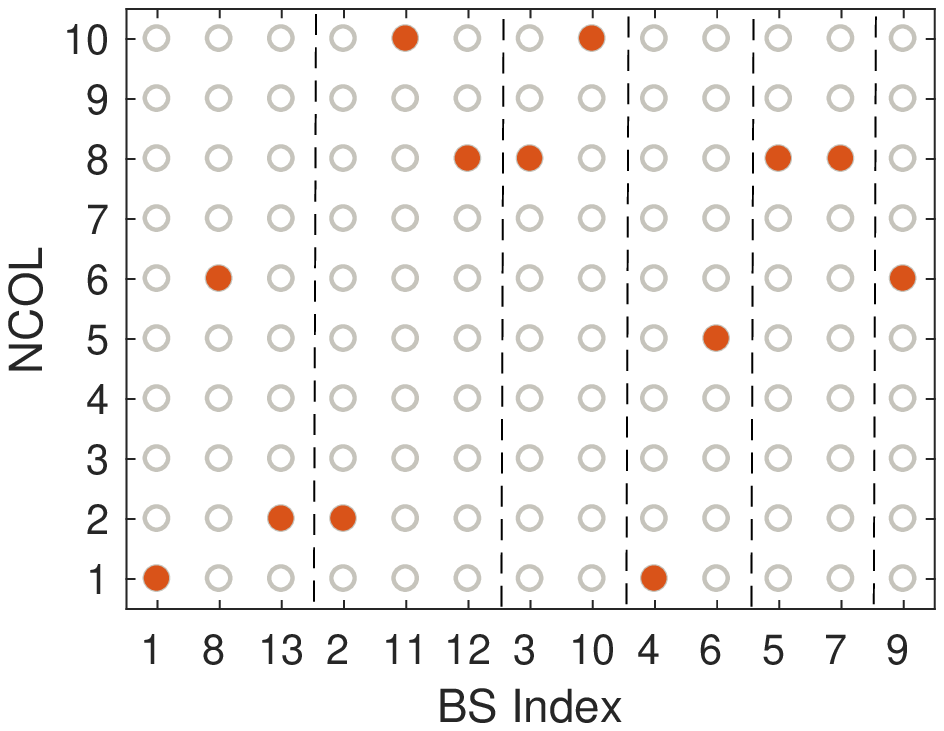}}
	\subfigure[Collaborative obedient BSs]{\label{a_obedient}
		\hspace{-0.2 in}
	\includegraphics[width=0.33\textwidth]{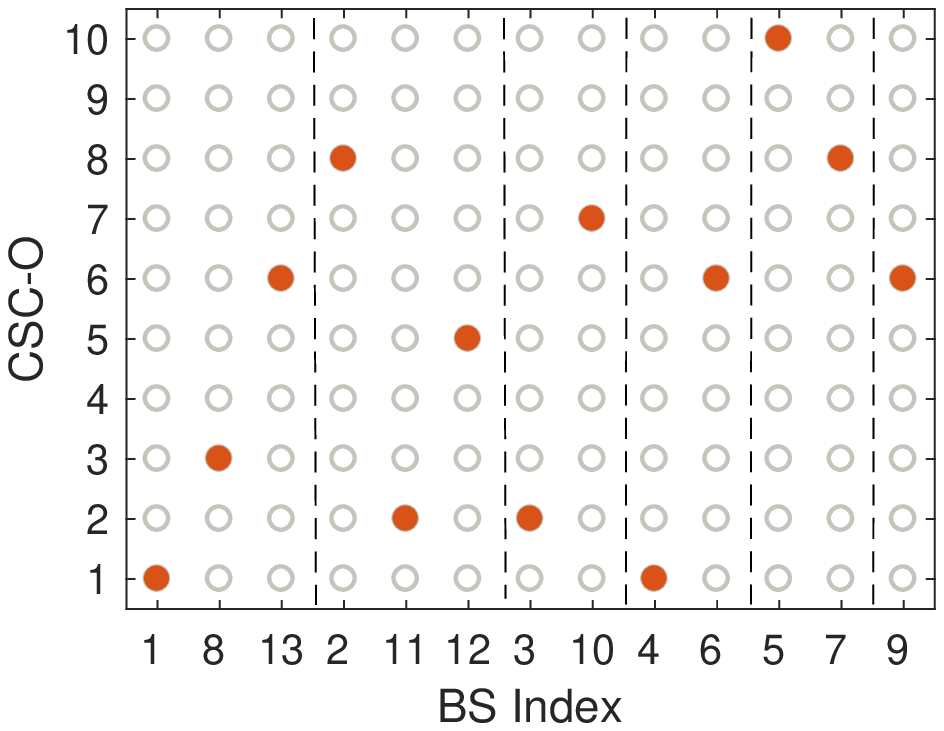}}
	\subfigure[Collaborative strategic BSs]{\label{a_strategic}
		\hspace{-0.2 in}
	\includegraphics[width=0.33\textwidth]{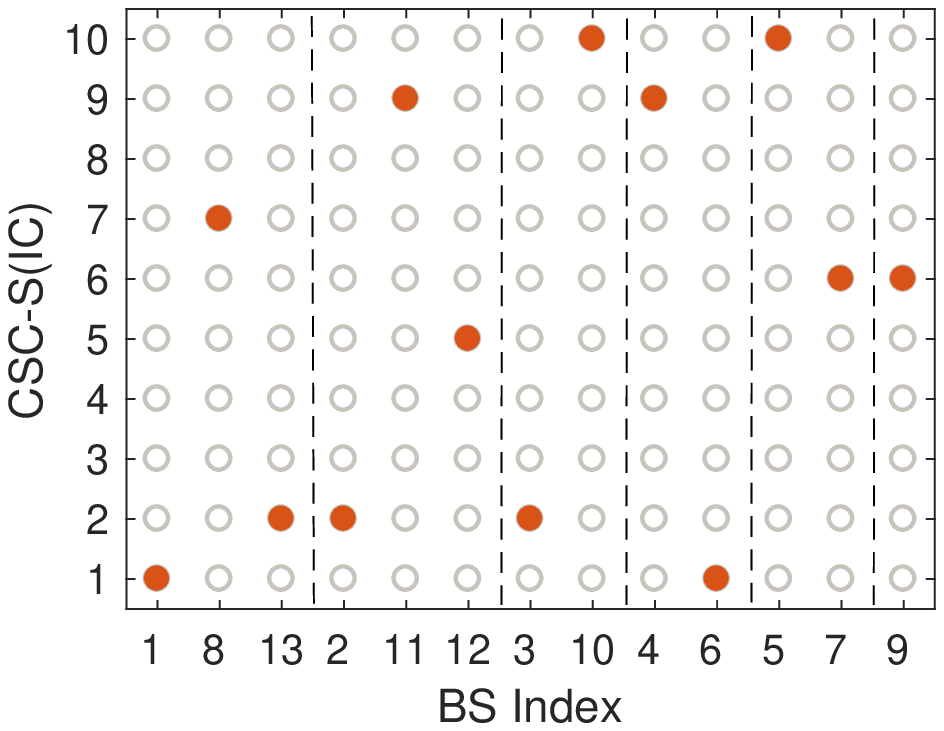}}	
	\caption{Service caching decision}
	\vspace{-0.2 in}
	\label{serv_caching_decision}
\end{figure}

\subsection{Chromatic Parallel Gibbs sampling}
Fig. \ref{chrm_gibbs} compares the convergence processes of CPGS and sequential Gibbs sampler when running CSC-O. It can be observed that CPGS converges much faster than the sequential Gibbs sampler yet reaches to the same optimal system utility. We further present the sampling probability of service caching decisions. Fig. \ref{gibbs_distr_uc} averages the action sampling probabilities of BSs in the first 10 iterations. It can be observed that in this stage the distribution of sampling probabilities are relatively even among potential actions for most BSs. Fig. \ref{gibbs_distr_c} averages the sampling probabilities in the last 10 iterations and we can see that the sampling probabilities become more focused and the optimal service caching decisions are sampled with a high probability, approaching to 1 for most BSs. This implies that the sampling probability has converged to the Gibbs distribution.

\begin{figure*}[htb]
	\begin{minipage}{0.32\linewidth}
	  \centering	
	  \includegraphics[width=1\textwidth]{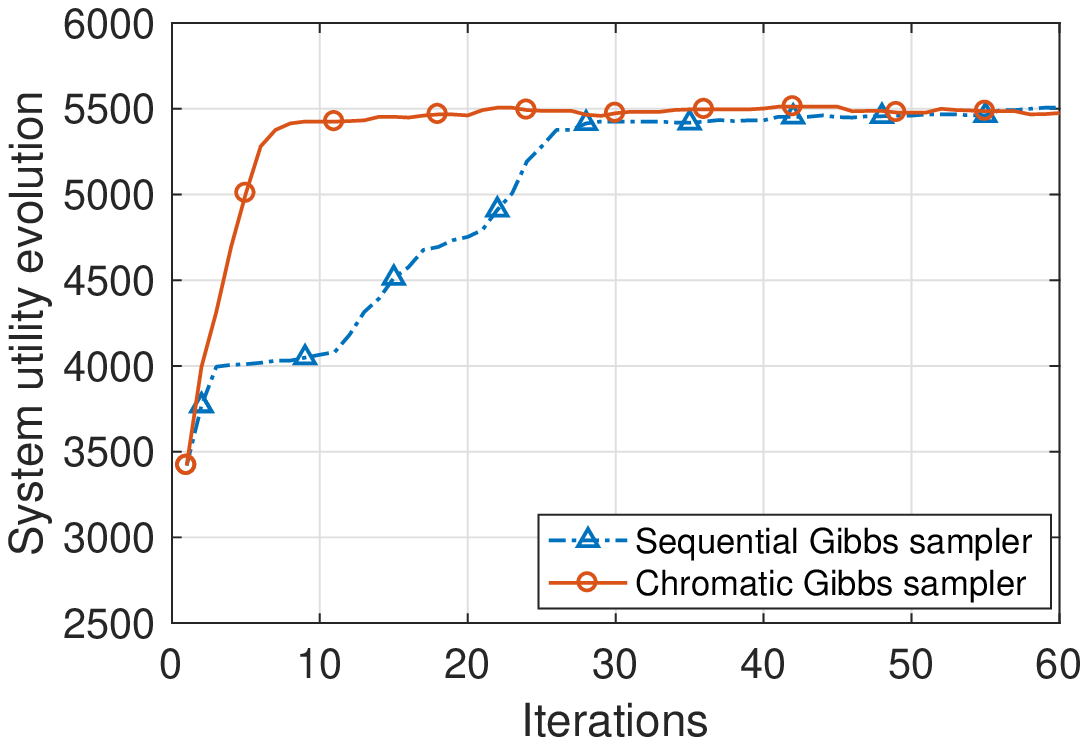}
      \caption{Convergence comparison}
	  \label{chrm_gibbs}
	\end{minipage}%
	\begin{minipage}{0.34\linewidth}
	  \centering	
	  \includegraphics[width=0.9 \textwidth]{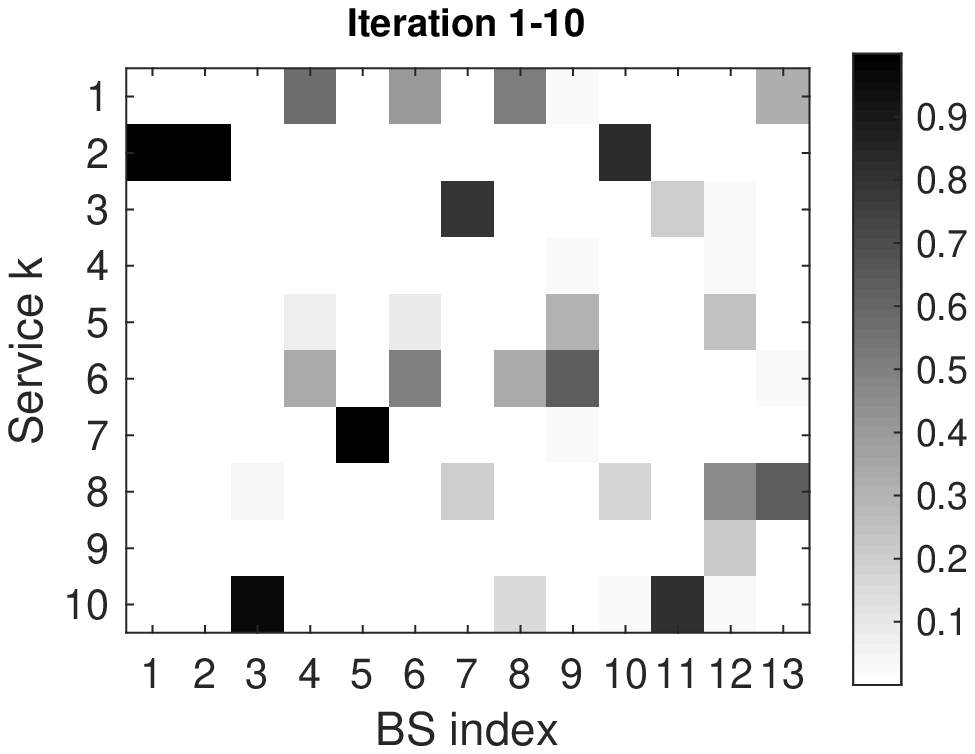}
	  \vspace{-0.1 in}
	  \caption{Unconverged Gibbs distribution}
      \label{gibbs_distr_uc}
    \end{minipage}%
	\begin{minipage}{0.34\linewidth}
	  \centering	
	  \includegraphics[width=0.9\textwidth]{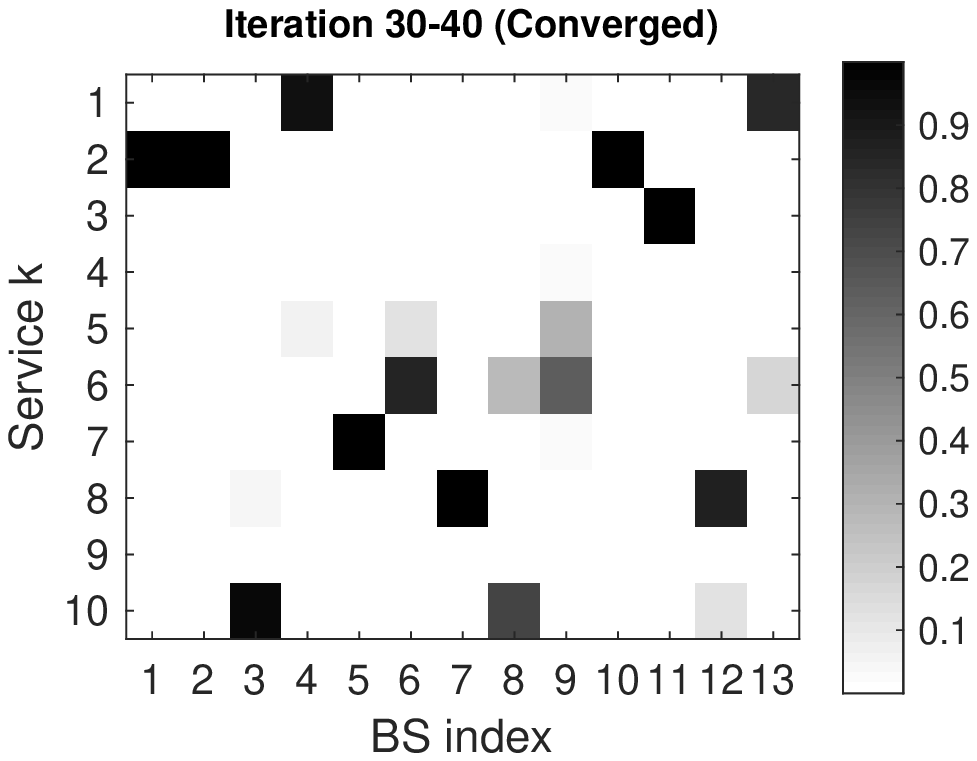}
	  \vspace{-0.1 in}
	  \caption{Converged Gibbs distribution}
      \label{gibbs_distr_c}
    \end{minipage}%
   \vspace{-0.15 in}
\end{figure*}
\vspace{-0.25 in}
%\begin{figure}[htb]
%	\centering	
%	\includegraphics[width=0.45\textwidth]{Figure/chrm_gibbs.eps}
%	\caption{Convergence comparison of sequential and chromatic Gibbs sampler}
%	\label{chrm_gibbs}
%	%\vspace{-0.2in}
%\end{figure}

%\begin{figure}[htb]
%	\centering	
%	\subfigure[Unconverged]{\label{gibbs_distr_uc}
%		\includegraphics[width=0.4\textwidth]{Figure/gibbs_distr_uc.eps}}
%	\subfigure[Converged]{\label{gibbs_distr_c}
%		\includegraphics[width=0.4\textwidth]{Figure/gibbs_distr_c.eps}}
%	\caption{Gibbs distribution}
%	\label{gibbs_distr}
%\end{figure}

\subsection{Coalition formation game}

Fig. \ref{coalitions} presents the generated coalitions by applying CSC-S(IC) and CSC-S(PC). Several points are worth noticing. First, the members in one coalition are not necessarily neighbors. However, collaboration only occurs between neighbors.  For example, in the coalition $\{1,5,7,8,11,12,13\}$ (Fig. \ref{coalitions_PF}), BS 1 and BS 12 are not physically adjacent to each other and therefore no collaboration takes place between them, yet these two BSs belong to the same coalition. Second, a BS may not want to join any coalition. In other words, a BS may want to form an isolated coalition that contains only itself since collaborating with any other BS in the network leads to utility loss. In this particular simulation, we observe that BS 2 and BS 9 separately form isolated coalitions with incentivized collaboration. As a result, the utilities of these two BS stay the same before and after the coalition formation. Third, the BSs in the same coalition tend to cache different types of service in order to accommodate more workload in the edge system. By comparing Fig. \ref{coalitions_PF} and Fig. \ref{coalitions_PWR}, we also see that \textit{Incentivized Collaboration} scheme promotes BSs to form more and larger coalitions compared with \textit{Plain Collaboration}.
\begin{figure}[htb]
	\centering	
	\subfigure[Coalition formation with IC]{\label{coalitions_PF}
		\includegraphics[width=0.4\textwidth]{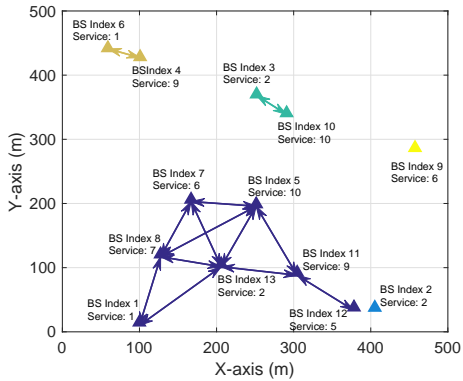}}
	\hspace{0.2 in}
	\subfigure[Coalition formation with PC]{\label{coalitions_PWR}
		\includegraphics[width=0.4\textwidth]{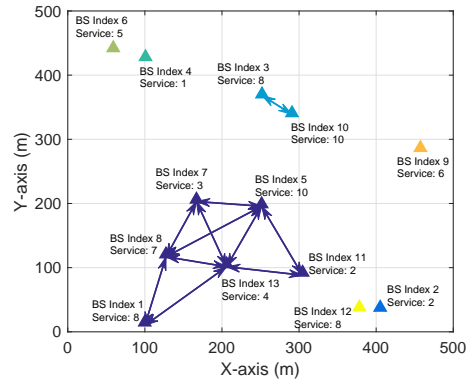}}
	\caption{Coalitions generated by CSC-S}
	\label{coalitions}
	\vspace{-0.25 in}
\end{figure}

During the coalition formation, each iteration is executed by following the Merge/Split operation that aims to find a Pareto-dominant coalition partition than the current partition. Therefore, after each iteration, at least one of the BSs improves its utilities without decreasing the utilities of other BSs. Figure \ref{utility_evolution} shows the system utility evolution of each specific Merge or Split
operation. We see that the system utility is improved with every Merge/Split operation and after only several iterations, the network converges to a stable partition of coalitions. This indicates that in practice, the complexity of running the proposed algorithm is low and hence, it can be easily implemented.

Fig. \ref{payments} show the payments of each SC. It can be verified that the payment of SCs is cleared within each coalition. For instance, in the coalition $\{1,5,7,8,10,11,12,13\}$ with \textit{Incentivized Collaboration}, SCs 1, 7, 11, 12 pay  $(32.5+75.7+94.9+104.2)$ and SCs 5, 8, 13 receive payment $(100.4+ 120.7+86.2)$.

\begin{figure*}[htb]
	\begin{minipage}[t]{0.5\linewidth}
		\centering	
		\includegraphics[width=0.9\textwidth]{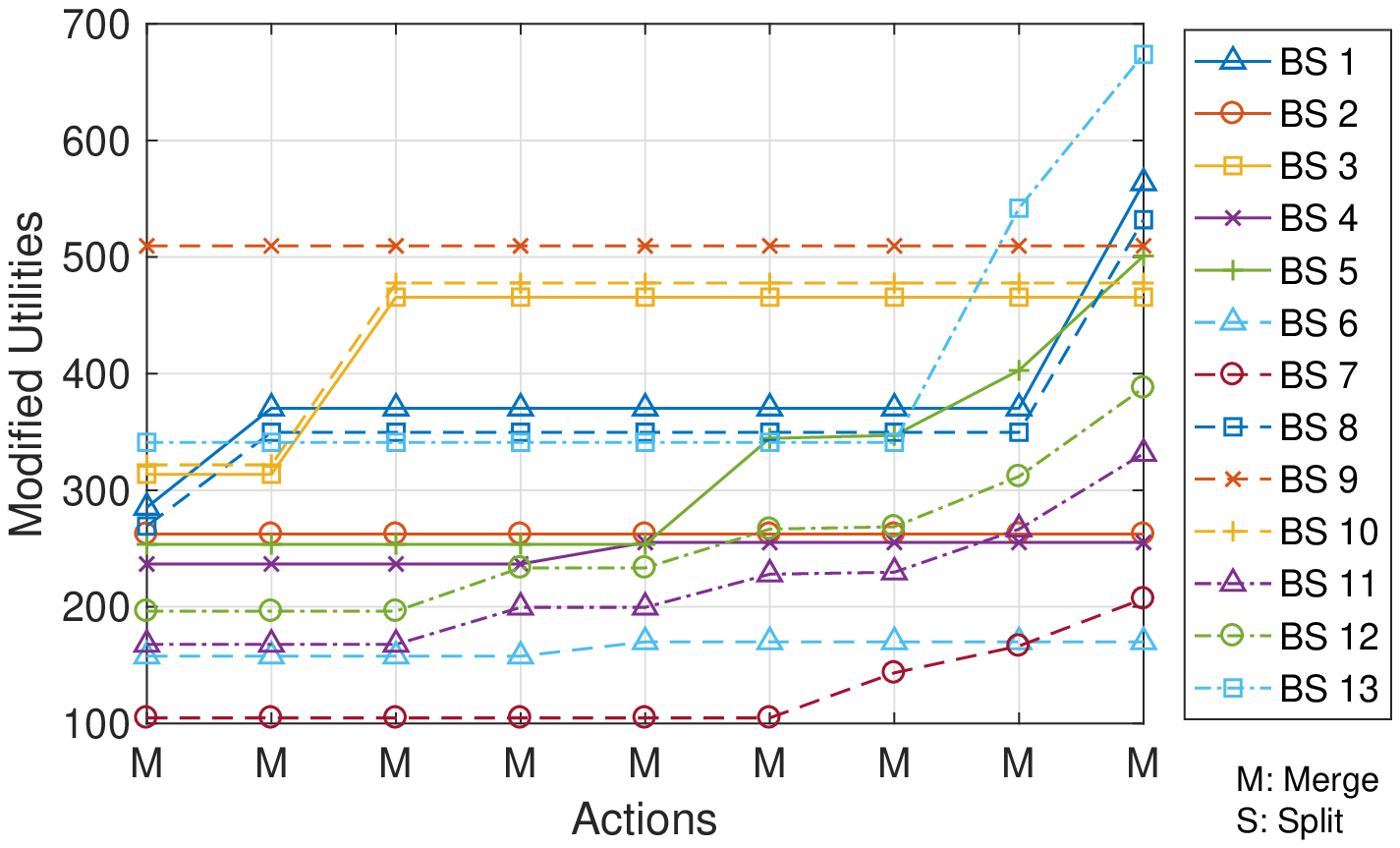}
		\caption{Utility evolution of SCs (IC)}
		\label{utility_evolution}
	\end{minipage}%
	\begin{minipage}[t]{0.5\linewidth}
		\centering	
		\includegraphics[width=1\textwidth]{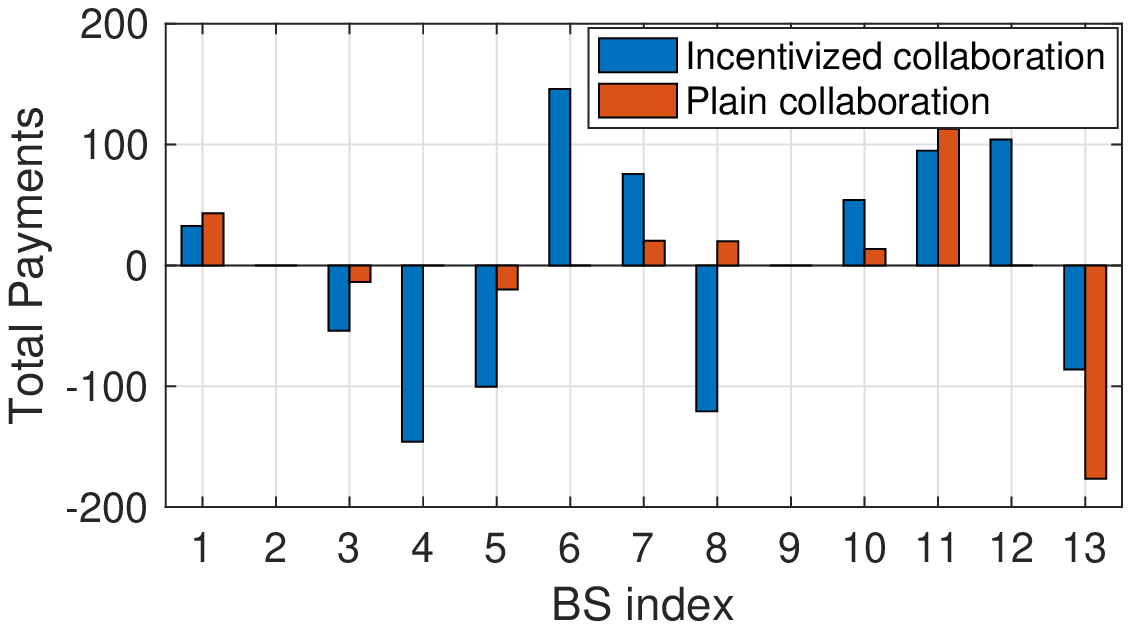}
		\caption{Payments of SCs}
		\label{payments}
	\end{minipage}%
\end{figure*}

\section{Conclusion}
In this paper, we studied collaborative service caching for dense small cell networks. We proposed an efficient decentralized algorithm, called CSC (Collaborative Service Caching), that tailors service caching decisions to collaborative small cell network with obedient (CSC-O) or strategic (CSC-S) BSs. The CSC-O is built on the chromatic parallel Gibbs sampler which tremendously accelerates the convergence of conventional Gibbs sampler yet provides provable optimality performance. The proposed framework is further extended to the SC network consisting of strategic BSs with the assistance of coalitional game. The developed coalition formation algorithm ensures that even selfish BSs are well-motivated to collaborate with other BSs in the coalition via proper incentive schemes. Although our work makes a valuable first step towards optimizing MEC considering service heterogeneity, there are a few limitations in the current model that demand future research effort. For example, user-cell association (load dispatching) decisions can be incorporated in the optimization framework and service demand prediction methods need to be developed for making service caching decisions.

\bibliographystyle{IEEEtran}
\bibliography{refs}
% if have a single appendix:
%\appendix[Proof of the Zonklar Equations]
% or
%\appendix  % for no appendix heading
% do not use \section anymore after \appendix, only \section*
% is possibly needed

% use appendices with more than one appendix
% then use \section to start each appendix
% you must declare a \section before using any
% \subsection or using \label (\appendices by itself
% starts a section numbered zero.)
%

%\appendices
%\section{Proof of the First Zonklar Equation}
%Appendix one text goes here.
%
%% you can choose not to have a title for an appendix
%% if you want by leaving the argument blank
%\section{}
%Appendix two text goes here.
%
%
%% use section* for acknowledgement
%\section*{Acknowledgment}
%

%The authors would like to thank...

% Can use something like this to put references on a page
% by themselves when using endfloat and the captionsoff option.
\ifCLASSOPTIONcaptionsoff
  \newpage
\fi

\end{document}